 \newtheorem{defin}[theorem]{Definition}
\newcommand{\nc}{\newcommand}
\nc{\rnc}{\renewcommand}
\nc{\nev}{\newenvironment}
\renewcommand{\fboxsep}{2mm}
\nc{\bigmid}{\;\big|\;}
\nc{\Bigmid}{\;\Big|\;}
\rnc{\max}{\textup{max}}
\rnc{\min}{\textup{min}}
\rnc{\log}{\textup{log}\;}
\rnc{\fboxsep}{2mm}
\newsavebox{\fminibox}
\newlength{\fminilength}
\newenvironment{fminipage}[1][\linewidth]{
  \setlength{\fminilength}{#1-2\fboxsep-2\fboxrule}%
  \begin{lrbox}{\fminibox}\begin{minipage}{\fminilength}}{
    \end{minipage}\end{lrbox}\noindent\fbox{\usebox{\fminibox}}
}
\newlength{\pwidth}
\newenvironment{problem}[3]{
  \begin{center}
    \ifthenelse{\equal{#1}{}}{\setlength{\pwidth}{\columnwidth}}%
      {\setlength{\pwidth}{#1cm}}%
      \nc{\instance}{\item[Instance]}%
      \nc{\parameter}{\item[Parameter]}%
      \rnc{\problem}{\item[Problem]}%
      \begin{fminipage}[\pwidth]\upshape
        \ifthenelse{\equal{#3}{}}{}{{\scshape #3}}
        \begin{list}{}{
            \ifthenelse{\equal{#2}{}}%
            {\settowidth{\labelwidth}{\textit{Parameter:}}}%
          {\settowidth{\labelwidth}{\textit{#2:}}}
          
          \setlength{\leftmargin}{\labelwidth+\labelsep}
          \setlength{\itemsep}{0ex}
          \setlength{\parsep}{0ex}
        }
      }{
      \end{list}
    \end{fminipage}
  \end{center}
}
\newcommand{\nprob}[4][]{
  \index{\textsc{#2}}
  \begin{problem}{#1}{Instance}{#2}
    \instance #3.
    \problem #4.
  \end{problem}
}
\rnc{\mod}[1]{\,(\textup{mod }#1)}
\rnc{\mathbf}[1]{\text{\itshape\bfseries #1}}
\nc{\PTIME}{\textup{PTIME}}
\nc{\NP}{\textup{NP}}
\nc{\FPT}{\textup{FPT}}
\nc{\W}[1]{\textup{W[#1]}}
\nc{\CSP}{\textsc{Csp}}
\nc{\pCSP}{\textit{p-}\textsc{Csp}}
\nc{\HOM}{\textsc{Hom}}
\nc{\pHOM}{\textit{p-}\textsc{Hom}}
\nc{\EMB}{\textsc{Emb}}
\nc{\pEMB}{\textit{p-}\textsc{Emb}}
\nc{\tw}{\textup{tw}}
\nc{\hw}{\textup{hw}}
\nc{\ghw}{\textup{ghw}}
\nc{\mw}{\textup{mw}}
\nc{\fhw}{\textup{fhw}}
\nc{\aw}{\textup{aw}}
\nc{\RA}{\textup{RA}}
\nc{\weight}{\textup{weight}}
\keywords{constraint satisfaction, hypergraphs, hypertree width, fractional edge covers}
\begin{document}
\title{
Constraint Solving via Fractional Edge Covers\footnote{An extended abstract of the paper appeared in the 
Proceedings of the seventeenth annual ACM-SIAM Symposium on Discrete Algorithms (SODA 2006).}}
\author{Martin Grohe\affil{RWTH Aachen University, 
Lehrstuhl für Informatik 7, 
Aachen, 
Germany, 
    \texttt{grohe@informatik.rwth-aachen.de}}
D\'aniel Marx\affil{
Computer and Automation Research Institute,
Hungarian Academy of Sciences (MTA SZTAKI),
Budapest, Hungary,
\texttt{dmarx@cs.bme.hu.} \footnote{Research  supported by the European Research Council (ERC)  grant 
``PARAMTIGHT: Parameterized complexity and the search for tight complexity results,'' reference 280152.}}}
\date{}
\begin{abstract}
  Many important combinatorial problems can be modeled as constraint
  satisfaction problems. Hence identifying polynomial-time solvable
  classes of constraint satisfaction problems has received a lot of
  attention. In this paper, we are interested in structural properties
  that can make the problem tractable. So far, the largest structural
  class that is known to be polynomial-time solvable is the class of
  bounded hypertree width instances introduced by \citeN{gotleosca02}. Here we identify a new class of
  polynomial-time solvable instances: those having bounded fractional
  edge cover number. 

  Combining hypertree width and fractional edge cover number, we then
  introduce the notion of fractional hypertree width. We prove that
  constraint satisfaction problems with bounded fractional hypertree
  width can be solved in polynomial time (provided that a the tree
  decomposition is given in the input). Together with a recent
  approximation algorithm for finding such decompositions
  \cite{marx-fractional-talg}, it follows that bounded fractional
  hypertree width is now the most general known structural property that
  guarantees polynomial-time solvability.
\end{abstract}
\maketitle

\section{Introduction}
Constraint satisfaction problems form a large class of combinatorial problems
that contains many important ``real-world'' problems. An instance of a
constraint satisfaction problem consists of a set $V$ of variables, a domain
$D$, and a set $C$ of constraints. For example, the domain may be $\{0,1\}$,
and the constraints may be the clauses of a 3-CNF-formula. The objective is to
assign values in $D$ to the variables in such a way that all constraints are
satisfied. In general, constraint satisfaction problems are NP-hard;
considerable efforts, both practical and theoretical, have been made to
identify tractable classes of constraint satisfaction problems.

On the theoretical side, there are two main directions towards identifying
polynomial-time solvable classes of constraint satisfaction
problems. One is to restrict the
\emph{constraint language}, that is, the type of constraints that are allowed
(see, for example,
\cite{bul02,DBLP:journals/tocl/Bulatov11,buljeakro01,fedvar98,jeacohgys97,schae78}). Formally, the
constraint language can be described as a set of relations on the domain. The
other direction is to restrict the \emph{structure} induced by the constraints
on the variables (see, for example,
\cite{DBLP:journals/jcss/CohenJG08,dalkolvar02,decpea89,fre90,kolvar98,groschweseg01,DBLP:journals/jacm/Grohe07}). The present work
goes into this direction; our main contribution is the identification of a
natural new class of structurally tractable constraint satisfaction problems.

The \emph{hypergraph} of an instance $(V,D,C)$ has $V$ as its vertex
set and for every constraint in $C$ a hyperedge that consists of all
variables occurring in the constraint. For a class $\mathcal H$ of
hypergraphs, we let $\CSP(\mathcal H)$ be the class of all instances
whose hypergraph is contained in $\mathcal H$. The central question
is for which classes $\mathcal H$ of hypergraphs the problem
$\CSP(\mathcal H)$ is tractable. Most recently, this question has been
studied in
\cite{chedal05,DBLP:journals/jcss/CohenJG08,marx-tocs-truthtable,marx-stoc2010-csp,ggmss05}.  It is worth pointing out that the corresponding
question for the graphs (instead of hypergraphs) of instances, in
which two variables are incident if they appear together in a
constraint, has been completely answered in
\cite{DBLP:journals/jacm/Grohe07,groschweseg01} (under the complexity
theoretic assumption $\FPT\neq\W1$): For a class $\mathcal G$ of
graphs, the corresponding problem $\CSP(\mathcal G)$ is in polynomial
time if and only if $\mathcal G$ has bounded tree width. This can be
generalized to $\CSP(\mathcal H)$ for classes $\mathcal H$ of
hypergraphs of \emph{bounded hyperedge size} (that is, classes
$\mathcal H$ for which $\max\{|e|\mid\exists\,H=(V,E)\in\mathcal
H:\;e\in E\}$ exists).  It follows easily from the results of
\cite{DBLP:journals/jacm/Grohe07,groschweseg01} that for all classes
$\mathcal H$ of bounded hyperedge size,
\begin{equation}
  \label{eq:tw}
  \CSP(\mathcal H)\in\PTIME\iff\mathcal H\text{ has bounded tree width}
\end{equation}
(under the
assumption $\FPT\neq\W1$).

It is known that (1) does not generalize to arbitrary classes $\mathcal H$ of
hypergraphs (we will give a very simple counterexample in
Section~\ref{sec:prels}). The largest known family of classes of hypergraphs
for which $\CSP(\mathcal H)$ is in \PTIME\ consists of all classes of bounded
\emph{hypertree width} \cite{gotleosca02,gotleosca03,DBLP:journals/ai/GottlobLS00}.  Hypertree
width is a hypergraph invariant that generalizes acyclicity
\cite{ber76,fag83,yan81}. It is a very robust invariant; up to a constant
factor it coincides with a number of other natural invariants that
measure the global connectivity of a hypergraph~\cite{DBLP:journals/ejc/AdlerGG07}. On classes
of bounded hyperedge size, bounded hypertree width coincides with bounded tree
width, but in general it does not. It has been asked in
\cite{chedal05,DBLP:journals/jcss/CohenJG08,ggmss05,DBLP:journals/jacm/Grohe07} whether there are classes $\mathcal
H$ of unbounded hypertree width such that $\CSP(\mathcal H)\in\PTIME$. 
We
give an affirmative answer to this question.

Our key result states that $\CSP(\mathcal H)\in\PTIME$ for all classes
$\mathcal H$ of bounded \emph{fractional edge cover number}. A
\emph{fractional edge cover} of a hypergraph $H=(V,E)$ is a mapping
$x:E\to[0,\infty)$ such that $\sum_{e\in E, v\in e}x(e)\ge 1$ for all
$v\in V$.  The number $\sum_{e\in E}x(e)$ is the \emph{weight} of $x$.
The \emph{fractional edge cover number} $\rho^*(H)$ of $H$ is the minimum of
the weights of all fractional edge covers of $H$. It follows from standard
linear programming results that this minimum exists and is
rational. Furthermore, it is
easy to construct classes $\mathcal H$ of hypergraphs that have bounded
fractional edge cover number and unbounded hypertree width (see
Example~\ref{exa:rho*<ghw}).

We then start a more systematic investigation of the interaction
between fractional covers and hypertree width. We propose a new
hypergraph invariant, the \emph{fractional hypertree width}, which
generalizes both the hypertree width and fractional edge cover number
in a natural way.  Fractional hypertree width is an interesting hybrid
of the ``continuous'' fractional edge cover number and the
``discrete'' hypertree width.  We show that it has properties that are
similar to the nice properties of hypertree width. In particular, we
give an approximative game characterization of fractional hypertree
width similar to the characterization of tree width by the ``cops and robber'' game \cite{seytho93}. Furthermore, we prove that for
classes $\mathcal H$ of bounded fractional hypertree width, the
problem $\CSP(\mathcal H)$ can be solved in polynomial time provided
that a fractional hypertree decomposition of the underlying hypergraph
is given together with the input instance.  We do not know if for
every fixed $k$ there is a polynomial-time algorithm for finding a
fractional hypertree decomposition of width $k$. However, a recent
result \cite{marx-fractional-talg} shows that we can find an
approximate decomposition whose width is bounded by a (cubic) function
of the fractional hypertree width. This is sufficient to show that
$\CSP(\mathcal H)$ is polynomial-time solvable for classes $\mathcal
H$ of bounded fractional hypertree width, even if no decomposition is
given in the input. Therefore, bounded fractional hypertree width is
the so far most general hypergraph property that makes $\CSP(\mathcal
H)$ polynomial-time solvable. Note that this property is strictly more
general than bounded hypertree width and bounded fractional edge cover
number (see Figure~\ref{fig:sets}).
\begin{figure}[t]
\begin{center}
\includegraphics[width=0.55\linewidth]{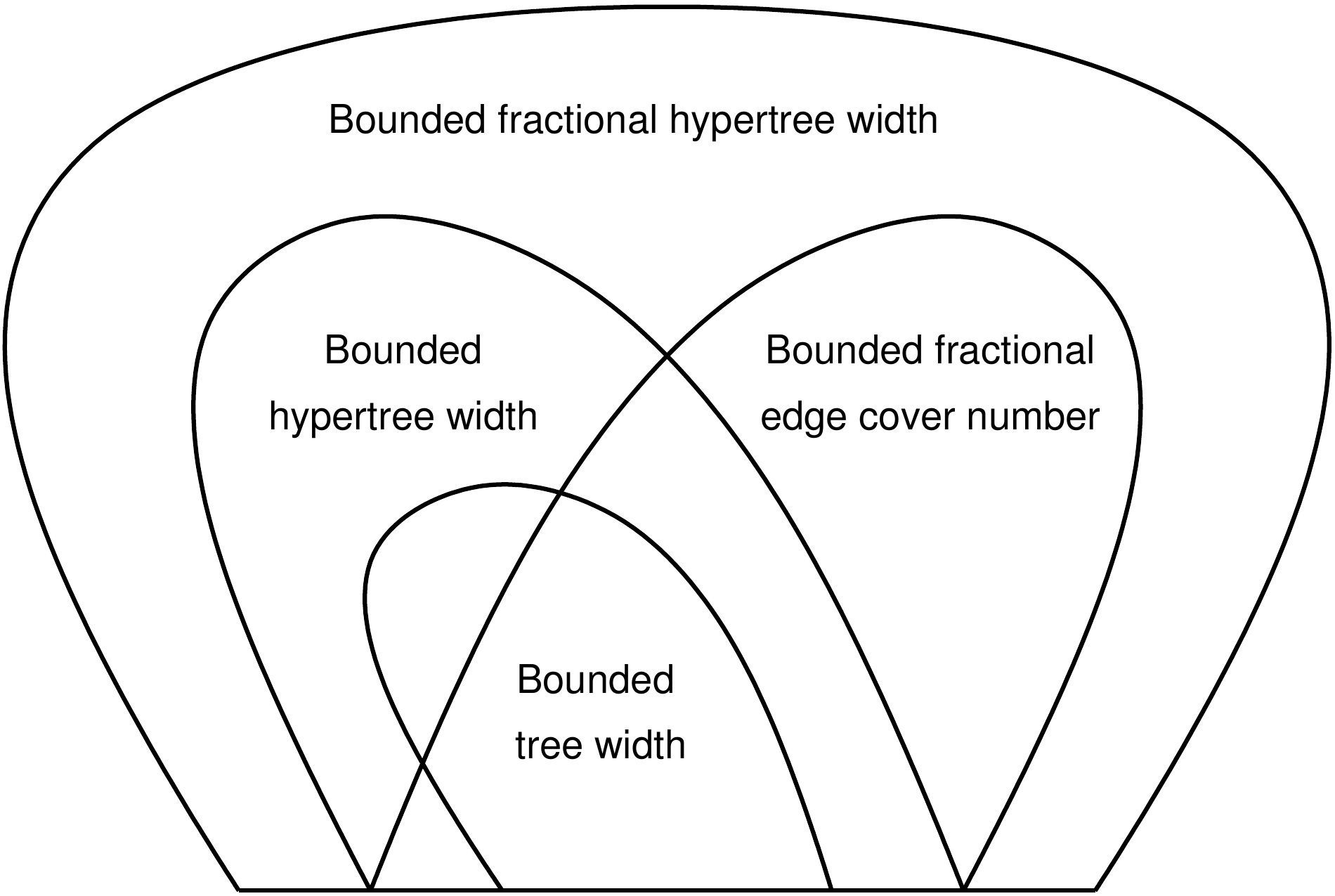}
\caption{Hypergraph properties that make CSP polynomial-time solvable.}\label{fig:sets}
\end{center}
\end{figure}

For classes $\mathcal H$ of hypergraphs of bounded fractional
hypertree width, we also show that all solutions of an instance of
$\CSP(\mathcal H)$ can be computed by a polynomial delay
algorithm. Closely related to the problem of computing all solutions
of a \CSP-instance is the problem of evaluating a conjunctive query
against a relational database. We show that conjunctive queries whose
underlying hypergraph is in $\mathcal H$ can be evaluated by a
polynomial delay algorithm as well.  Finally, we look at the
homomorphism problem and the embedding problem for relational
structures. The homomorphism problem is known to be equivalent to the
CSP \cite{fedvar98} and hence can be solved in polynomial time if the
left-hand side structure has an underlying hypergraph in a class
$\mathcal H$ of bounded fractional hypertree width. This implies that
the corresponding embedding problem, parameterized by the size of the
universe of the left-hand side structure, is fixed-parameter
tractable. Recall that a problem is {\em fixed-parameter tractable}
(FPT) by some parameter $k$ if the problem can be solved in time
$f(k)\cdot n^{O(1)}$ for a computable function $f$ depending only on
$k$. In particular, if a problem is polynomial-time solvable, then it
is fixed-parameter tractable (with any parameter $k$). 

\textbf{Follow up work.} Let us briefly discuss how the ideas
presented in the conference version of this paper \cite{1109590}
influenced later work.  The conference version of this paper posed as
an open question whether for every fixed $k$ there is a
polynomial-time algorithm that, given a hypergraph with fractional
hypertree width $k$, finds a fractional hypertree decomposition having
width $k$ or, at least, having width bounded by a function of
$k$. This question has been partially resolved by the algorithm of
\cite{marx-fractional-talg} that finds a fractional hypertree
decomposition of width $O(k^3)$ if a decomposition of width $k$
exists. This algorithm makes the results of the present paper
stronger, as the polynomial-time algorithms for CSPs with bounded
fractional hypertree width no longer need to assume that a
decomposition is given in the input (see
Section~\ref{sec:algor-appl}). The problem of finding a decomposition
of width exactly $k$, if such a decomposition exists, is still
open. In the special case of hypergraphs whose incidence graphs are
planar, a constant-factor approximation of fractional hypertree width
can be found by exploiting the fact that for such graphs fractional
hypertree width and the tree width of the incidence graph can differ
only by at most a constant factor \cite{DBLP:conf/stacs/FominGT09}.

A core combinatorial idea of the present paper is that Shearer's Lemma
gives an upper bound on the number of solutions of a CSP instance and
we use this bound for the subproblems corresponding to the bags of a
fractional hypertree decomposition. This upper bound has been
subsequently used by \citeN{atserias-focs2008-joins} in
the context of database queries, where it is additionally shown that a
linear programming duality argument implies the tightness of this
bound. We repeat this argument here as Theorem~\ref{th:dual}. An
optimal query evaluation algorithm matching this tight bound was given
by \citeN{Ngo:2012:WOJ:2213556.2213565}. The bound was
generalized to the context of conjunctive queries with functional
dependencies by \citeN{Gottlob:2012:STB:2220357.2220363}.

We show that if $\mathcal H$ is a class of hypergraphs with bounded
fractional hypertree width, then $\CSP(\mathcal H)$ is fixed-parameter
tractable parameterized by the number of variables and, in fact,
polynomial-time solvable (using the algorithm of
\cite{marx-fractional-talg} for finding decompositions). It is a
natural question if there are more general fixed-parameter tractable
or polynomial-time solvable classes of hypergraphs. Very recently,
\citeN{marx-stoc2010-csp} gave a strictly more general such class by
introducing the notion of {\em submodular width} and showing that
$\CSP(\mathcal H)$ is fixed-parameter tractable for classes $\mathcal
H$ with bounded submodular width.  Furthermore, it was shown in
\cite{marx-stoc2010-csp} that there are no classes $\mathcal H$
with unbounded submodular width that make $\CSP(\mathcal H)$
fixed-parameter tractable (the proof uses a complexity-theoretic
assumption called Exponential Time Hypothesis
\cite{MR1894519}). However, with respect to polynomial-time
solvability, bounded fractional hypertree width is still the most
general known tractability condition and it is an open question
whether there are classes $\mathcal H$ with unbounded fractional
hypertree width such that $\CSP(\mathcal H)$ is polynomial-time
solvable.

\section{Preliminaries}
\label{sec:prels}

\subsection{Hypergraphs}
A \emph{hypergraph} is a pair $H = (V(H),E(H))$, consisting of a
set
$V(H)$ of \emph{vertices} and a set $E(H)$ of nonempty subsets of $V(H)$, the
\emph{hyperedges} of $H$. 
We always assume that hypergraphs have no isolated
vertices, that is, for every $v\in V(H)$ there exists at least one $e\in E(H)$
such that $v\in e$.

For a hypergraph $H$ and a set $X\subseteq V(H)$, the \emph{subhypergraph of
  $H$ induced by $X$} is the hypergraph $H[X]=(X,\{e\cap X \mid e \in
E(H)\text{ with }e\cap X\neq\emptyset\})$. We let $H\setminus X=H[V(H)\setminus X]$. The \emph{primal graph} of
a hypergraph $H$ is the graph
\begin{equation*}
\begin{split}
\underline H=(V(H), & \{\{v,w\}\mid v\neq w,\text{ there exists an } \\
& e \in E(H)\text{ such that }\{v,w\}\subseteq e\}).
\end{split}
\end{equation*}
A hypergraph $H$ is \emph{connected} if $\underline H$ is connected. A set
$C\subseteq V(H)$ is \emph{connected (in $H$)} if the induced subhypergraph
$H[C]$ is connected, and a \emph{connected component} of $H$ is a maximal
connected subset of $V(H)$. A sequence of vertices of $H$ is a {\em path} of $H$ if it is
a path of $\underline H$.

A \emph{tree decomposition} of a hypergraph $H$ is a tuple $(T,(B_t)_{t \in
  V(T)})$, where $T$ is a tree and $(B_t)_{t \in V(T)}$ a family of subsets of
$V(H)$ such that for each $e \in E(H)$ there is a node $t \in V(T)$ such that
$e \subseteq B_t$, and for each $v \in V(H)$ the set $\{t\in V(T)\mid v \in
B_t\}$ is connected in $T$. The sets $B_t$ are called the \emph{bags} of the
decomposition. The \emph{width} of a tree-decomposition $(T,(B_t)_{t \in
  V(T)})$ is $\max\big\{|B_t|\bigmid t\in V(t)\}-1$. The \emph{tree width}
$\tw(H)$ of a hypergraph $H$ is the minimum of the widths of all
tree-decompositions of $H$.  It is easy to see that $\tw(H)=\tw(\underline H)$
for all $H$.

It will be convenient for us to view the trees in tree-decompositions as being
rooted and directed from the root to the leaves. For a node $t$ in a (rooted)
tree $T=(V(T),E(T))$, we let $T_t$ be the subtree rooted at $t$, that is, the
induced subtree of $T$ whose vertex set is the set of all vertices reachable
from $t$.

We say that a class $\mathcal H$ of hypergraphs is of \emph{bounded tree
  width} if there is a $k$ such that $\tw(H)\le k$ for all $H\in\mathcal
  H$. We use a similar terminology for other hypergraph invariants.

\subsection{Constraint satisfaction problems}
A \emph{CSP instance} is a triple $I=(V,D,C)$, where $V$ is a set of
\emph{variables}, $D$ is a set called the \emph{domain}, and $C$ is a set of
\emph{constraints} of the form $\langle (v_1,\ldots,v_k), R\rangle$, where
$k\ge 1$ and $R$ is a $k$-ary relation on $D$.
A \emph{solution} to the instance $I$ is an assignment $\alpha:V\to D$ such
that for all constraints $\langle (v_1,\ldots,v_k), R\rangle$ in $C$ we have
$(\alpha(v_1),\ldots,\alpha(v_k))\in R$.

Constraints are specified by explicitly enumerating all possible combinations
of values for the variables, that is, all tuples in the relation $R$.
Consequently, we define the \emph{size} of a constraint $c=\langle
(v_1,\ldots,v_k), R\rangle\in C$ to be the number $\|c\|=k+k\cdot|R|$.  The
\emph{size} of an instance $I=(V,D,C)$ is the number $\|I\|=|V|+|D|+\sum_{c\in
  C}\|c\|$. Of course, there is no need to store a constraint relation
repeatedly if it occurs in several constraints, but this only changes the size
by a polynomial factor. 

Let us make a few remarks about this explicit representation of the
constraints. There are important special cases of constraint
satisfaction problems where the constraints are stored implicitly,
which may make the representation exponentially more succinct.
Examples are Boolean satisfiability, where the constraint relations
are given implicitly by the clauses of a formula in conjunctive normal
form, or systems of arithmetic (in)equalities, where the constraints
are given implicitly by the (in)equalities. However, our
representation is the standard ``generic'' representation of
constraint satisfaction problems in artificial intelligence~(see, for
example, \cite{dec03}). An important application where the constraints
are always given in explicit form is the conjunctive query containment
problem, which plays a crucial role in database query
optimization. Kolaitis and Vardi~\cite{kolvar98} observed that it can
be represented as a constraint satisfaction problem, and the
constraint relations are given explicitly as part of one of the input
queries. A related problem from database systems is the problem of
evaluating conjunctive queries (cf.~Theorem~\ref{theo:cq}). Here the
constraint relations represent the tables of a relational database,
and again they are given in explicit form. The problem of
characterizing the tractable structural restrictions of CSP has also
been studied for other representations of the instances: one can
consider more succinct representations such as disjunctive formulas or
decision diagrams \cite{chegro10} or less succinct representations such
as truth tables \cite{marx-tocs-truthtable}. As the choice of
representation influences the size of the input and the running time
is expressed as a function of the input size, the choice of
representation influences the complexity of the problem and the exact
tractability criterion.

Observe that there is a polynomial-time algorithm deciding whether a given
assignment for an instance is a solution.

The \emph{hypergraph of} the CSP instance $I=(V,D,C)$ is the hypergraph $H_I$
with vertex set $V$ and a hyperedge $\{v_1,\ldots,v_k\}$ for all constraints
$\langle (v_1,\ldots,v_k), R\rangle$ in $C$. For every class $\mathcal H$,
we consider the following decision problem:

\nprob[9]{$\CSP(\mathcal H)$}{A CSP instance $I$ with $H_I\in\mathcal H$}{Decide if $I$
  has a solution}
If the class $\mathcal H$ is not polynomial-time decidable, we view this
as a promise problem, that is, we assume that we are only given instances $I$
with $H_I\in\mathcal H$, and we are only interested in algorithms that work
correctly and efficiently on such instances.

We close this section with a simple example of a class of hypergraphs of
unbounded tree width such that $\CSP(\mathcal H)$ is tractable.

\begin{example}
  Let $\mathcal H$ be that class of all hypergraphs $H$ that have a hyperedge
  that contains all vertices, that is, $V(H)\in E(H)$.  Clearly,
  $\mathcal H$ has unbounded tree width, because the hypergraph $(V,\{V\})$
  has tree width $|V|-1$. We claim that $\CSP(\mathcal H)\in\PTIME$.
  
  To see this, let $I=(V,D,C)$ be an instance of $\CSP(\mathcal H)$. Let
  $\langle (v_1,\ldots,v_k),R\rangle$ be a constraint in $C$ with
  $\{v_1,\ldots,v_k\}=V$. Such a constraint exists because $H_I\in\mathcal
  H$. Each tuple $\bar d=(d_1,\ldots,d_k)\in R$
  completely specifies an assignment $\alpha_{\bar d}$ defined by
  $\alpha_{\bar d}(v_i)=d_i$ for $1\le i\le k$. If for some $i,j$ we have
  $v_i=v_j$, but $d_i\neq d_j$, we leave $\alpha_{\bar d}$ undefined.

  Observe that $I$ is satisfiable if and only if there is a tuple $\bar d\in
  R$ such that $\alpha_{\bar
    d}$ is (well-defined and) a solution for $I$. As $|R|\le\|I\|$, this can
  be checked in polynomial time.
\end{example}

\section{A Polynomial-time algorithm for CSPs with bounded fractional 
cover number}

In this section we prove that if the hypergraph $H_I$ of a CSP instance $I$
has fractional edge cover number $\rho^*(H_I)$, then it can be decided in $\|
I \| ^ {\rho^*(H_I)+O(1)}$ time whether $I$ has a solution. Thus if ${\mathcal
  H}$ is a class of hypergraphs with bounded fractional edge cover number
(that is, there is a constant $r$ such that $\rho^*(H)\le r$ for every $H\in
{\mathcal H}$), then $\CSP({\mathcal H})\in \PTIME$.
Actually, we prove a stronger result: A CSP instance
$I$ has at most $\| I \| ^ {\rho^*(H_I)}$ solutions and all the
solutions can be enumerated in time $\| I \| ^ {\rho^*(H_I)+O(1)}$.

The proof relies on a combinatorial lemma known as Shearer's Lemma. We
use Shearer's Lemma to bound the number of solutions of a CSP
instance; our argument resembles an argument that Friedgut and Kahn
\cite{frikah98} used to bound the number of subhypergraphs of a
certain isomorphism type in a hypergraph. The second author
applied similar ideas in a completely different algorithmic context
\cite{marx-closest-full}.

The \emph{entropy} of
a random variable
$X$ with range $U$ is 
\[
h[X]:=- \sum_{x \in U}
  \Pr(X=x)\log \Pr(X = x)
\]
Shearer's lemma gives an upper bound of a distribution on a product space in
terms of its marginal distributions. 

\begin{lemma}[Shearer's Lemma \cite{chufragrashe86}]\label{lem:shearer}
  Let $X=(X_i\mid i\in I)$ be a random variable, and let $A_j$, for $j\in [m]$, be
  (not necessarily distinct) subsets of the index set $I$ such that each $i\in
  I$ appears in at least $q$ of the sets $A_j$. For every $B\subseteq I$, let
  $X_B=(X_i\mid i\in B)$. Then
  \[
  \sum_{j=1}^m h[X_{A_j}]\ge q\cdot h[X].
  \]
\end{lemma}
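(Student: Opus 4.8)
The plan is to deduce the inequality from two classical properties of Shannon entropy, which I would briefly recall: the \emph{chain rule} $h[X] = \sum_{i=1}^{n} h[X_i \mid X_1,\dots,X_{i-1}]$, and the fact that \emph{conditioning does not increase entropy}, i.e.\ $h[X_i \mid X_S] \ge h[X_i \mid X_{S'}]$ whenever $S \subseteq S' \subseteq I$; here $h[Y \mid Z] := h[Y,Z] - h[Z] \ge 0$ denotes conditional entropy. (We may assume all $X_i$ have finite range, the only case needed for the application.) First I would fix an arbitrary linear order on $I$ and identify $I$ with $\{1,\dots,n\}$.

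The main step is a per-set estimate. Fix $j \in [m]$ and write $A_j = \{a_1 < a_2 < \dots < a_k\}$. Applying the chain rule \emph{inside} $A_j$ gives $h[X_{A_j}] = \sum_{l=1}^{k} h[X_{a_l} \mid X_{a_1},\dots,X_{a_{l-1}}]$. Since the elements of $A_j$ are listed in increasing order, $\{a_1,\dots,a_{l-1}\} \subseteq \{1,\dots,a_l-1\}$, so monotonicity under conditioning yields $h[X_{a_l} \mid X_{a_1},\dots,X_{a_{l-1}}] \ge h[X_{a_l}\mid X_1,\dots,X_{a_l-1}]$, and hence $h[X_{A_j}] \ge \sum_{i\in A_j} h[X_i \mid X_1,\dots,X_{i-1}]$.

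Finally I would sum over $j$ and exchange the order of summation. Setting $q_i := |\{\, j \in [m] : i \in A_j \,\}| \ge q$, this gives
\[
\sum_{j=1}^{m} h[X_{A_j}] \;\ge\; \sum_{i\in I} q_i\cdot h[X_i \mid X_1,\dots,X_{i-1}] \;\ge\; q\sum_{i\in I} h[X_i \mid X_1,\dots,X_{i-1}] \;=\; q\cdot h[X],
\]
where the middle inequality uses $q_i \ge q$ together with the non-negativity of conditional entropy, and the last equality is again the chain rule. I do not expect a genuine obstacle here: the argument rests entirely on standard entropy inequalities, and the only point demanding care is the bookkeeping in the middle step — one must enumerate each $A_j$ in increasing order so that the conditioning set is genuinely contained in an initial segment $\{1,\dots,a_l-1\}$, which is precisely what makes the ``conditioning reduces entropy'' inequality applicable.
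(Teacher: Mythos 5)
Your argument is correct and complete. One thing to be aware of: the paper itself does not prove Shearer's Lemma at all --- it is quoted as a known result from \cite{chufragrashe86}, with a remark that the original proof rests on the submodularity of entropy and a pointer to \cite{rha01} for a ``simple proof.'' What you have written is, in essence, exactly that simple proof: expanding $h[X_{A_j}]$ by the chain rule along a fixed global order of $I$, enlarging each conditioning set $\{a_1,\dots,a_{l-1}\}$ to the full initial segment $\{1,\dots,a_l-1\}$ (this is where submodularity enters, in the guise of ``conditioning does not increase entropy''), and then double-counting over $i\in I$. The two points that genuinely need care are both handled correctly: you enumerate each $A_j$ in increasing order so that the conditioning set really is contained in an initial segment, and you invoke nonnegativity of conditional entropy to pass from $\sum_{i} q_i\, h[X_i\mid X_1,\dots,X_{i-1}]$ to $q\sum_{i} h[X_i\mid X_1,\dots,X_{i-1}]$, which is necessary because some indices may appear in strictly more than $q$ of the sets $A_j$. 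The finite-range assumption is harmless for the paper's application, where $X$ is uniform on the (finite) solution set of a CSP instance. So your proposal does not diverge from the paper --- it supplies a self-contained derivation where the paper relies on a citation.
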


Lemma~\ref{lem:shearer} is easy to see in the special case when $q=1$
and $\{A_1,\dots,A_p\}$ is a partition of $V$. The proof of the
general case in \cite{chufragrashe86} is based on the submodularity of
entropy. See also \cite{rha01} for a simple proof.

\begin{lemma}\label{lem:count-bound}
If $I=(V,D,C)$ is a CSP instance where every constraint relation contains at most $N$ tuples, then $I$ has at most $N^
{\rho^*(H_I)} \le \| I \| ^{\rho^*(H_I)}$ solutions.
\end{lemma}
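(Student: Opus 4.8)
The plan is to bound the number of solutions via an entropy argument using Shearer's Lemma. Let $S$ be the set of solutions of $I$; if $S=\emptyset$ the claim is trivial, so assume $S\neq\emptyset$. Sample $\alpha$ uniformly at random from $S$, and view this as a random variable $X=(X_v\mid v\in V)$ indexed by the variables, where $X_v=\alpha(v)$. Since $\alpha$ is uniform over $S$, we have $h[X]=\log|S|$. The goal is to show $\log|S|\le\rho^*(H_I)\cdot\log N$, which gives $|S|\le N^{\rho^*(H_I)}$, and then $N\le\|I\|$ finishes the lemma.

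The key observation is that for each constraint $c=\langle(v_1,\dots,v_k),R\rangle\in C$ with hyperedge $e=\{v_1,\dots,v_k\}$, the projection of any solution onto $e$ lies in (the set of distinct value-tuples of) $R$, so the random variable $X_e$ takes at most $|R|\le N$ values; hence $h[X_e]\le\log|R|\le\log N$. Now fix an optimal fractional edge cover $x:E(H_I)\to[0,\infty)$ of weight $\rho^*(H_I)$. To apply Shearer's Lemma cleanly I would first reduce to the rational/integer case: since $\rho^*$ is attained by a rational optimum (as noted in the introduction), scale by a common denominator $D_0$ so that $D_0\cdot x(e)$ is a nonnegative integer for every edge $e$, and form the multiset family $\{A_j\}$ that contains each hyperedge $e$ exactly $D_0\cdot x(e)$ times. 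By the fractional cover inequality $\sum_{e\ni v}x(e)\ge 1$, each vertex $v$ lies in at least $q:=D_0$ of the sets $A_j$. Shearer's Lemma then gives
\[
q\cdot h[X]\le\sum_j h[X_{A_j}]=\sum_{e\in E(H_I)}D_0\cdot x(e)\cdot h[X_e]\le D_0\cdot\Big(\sum_{e}x(e)\Big)\cdot\log N = D_0\cdot\rho^*(H_I)\cdot\log N.
\]
Dividing by $q=D_0$ yields $\log|S|=h[X]\le\rho^*(H_I)\cdot\log N$, as desired.

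I expect the main subtlety to be the passage from the fractional cover to an integer multiset family so that Shearer's Lemma, as stated for integer multiplicities and integer $q$, applies verbatim; the scaling trick above handles this, using only that an optimal fractional cover can be taken rational. A minor point to get right is that $X_e$ genuinely ranges over at most $|R|$ values even when some $v_i=v_j$ coincide among the variables of a constraint — but this only decreases the number of attainable tuples, so $h[X_e]\le\log|R|$ still holds. Everything else is routine: $h[X]=\log|S|$ for the uniform distribution on $S$, $|R|\le N$ by hypothesis, and $N\le\|I\|$ since $|R|$ is part of the size of the corresponding constraint.
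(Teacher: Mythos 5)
Your proposal is correct and follows essentially the same route as the paper's proof: an optimal rational fractional cover is scaled to integer multiplicities to form the multiset family for Shearer's Lemma, the marginal entropy on each hyperedge is bounded by $\log N$ via the constraint relation, and the cover inequality supplies the parameter $q$. The only cosmetic difference is notational (your common denominator $D_0$ versus the paper's $x(e)=p_e/q$), and your remark about repeated variables within a constraint is a valid minor point that the paper handles implicitly.
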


\begin{proof}
Let $x$ be a fractional edge cover of
  $H_I$ with $\sum_{e\in E(H_I)}x(e)=\rho^*(H_I)$; it follows from the
  standard results of linear programming that such an $x$ exists
  with rational values. 
Let $p_e$ and $q$ be 
  nonnegative integers such that $x(e) = p_e/q$. Let $m = \sum_{e\in E(H_I)} p_e$, and let
    $A_1,\ldots,A_m$ be a sequence of subsets of $V$ that contains
    precisely $p_e$ copies of the set $e$, for all $e\in E(H_I)$. Then
    every variable $v \in V$ is contained in at least 
  \[
\sum_{e\in E(H_I):v\in e} p_e
  \;=\; q \cdot \sum_{e\in E(H_I) : v \in e} x(e) \;\geq\; q
  \]
   of the sets $A_i$ (as $x$ is a fractional edge cover).  Let
  $X=(X_v\mid v\in V)$ be uniformly distributed on the solutions of
  $I$, which we assume to be non-empty as otherwise the claim is
  obvious. That is, if we denote by $S$ the number of solutions of
  $I$, then we have $\Pr(X=\alpha)=1/S$ for every solution $\alpha$ of
  $I$.  Then $h[X]=\log S$. We apply Shearer's Lemma to the random
  variable $X$ and the sequence $A_1$, $\dots$, $A_m$ of subsets of $V$.  Assume that $A_i$ corresponds to some
  constraint $\langle (v'_1,\ldots,v'_k), R\rangle$. Then the marginal
  distribution of $X$ on $(v'_1,\ldots,v'_k)$ is $0$ on all tuples not
  in $R$.  Hence the entropy of $X_{A_i}$ is is bounded by the entropy
  of the uniform distribution on the tuples in $R$, that is,
  $h[X_{A_i}]\le \log N$.  Thus by Shearer's Lemma, we have
  \[
  \sum_{e\in E(H_I)} p_e\cdot\log N\ge
\sum_{e\in E(H_I)} p_e \cdot h[X_{e}]=
\sum_{i=1}^m h[X_{A_i}]\ge q\cdot
  h[X]=q\cdot \log S.
  \]
  It follows that 
  \[
  S \le 2^{\sum_{e\in E(H_I)} (p_e/q)\cdot\log N}=2^{\rho^*(H_I)\cdot\log N}=N^{\rho^*(H_I)}.
  \] 
\end{proof}

We would like to turn the upper bound of Lemma~\ref{lem:count-bound}
into an algorithm enumerating all the solutions, but the proof of
Shearer's Lemma is not algorithmic. However, a very simple algorithm
can enumerate the solutions, and Lemma~\ref{lem:count-bound} can be
used to bound the running time of this algorithm. Starting with a
trivial subproblem consisting only of a single variable, the algorithm
enumerates all the solutions for larger and larger subproblems by
adding one variable at a time. To define these subproblems, we need
the following definitions:
\begin{defin}\label{def:projection}
  Let $R$ be an $r$-ary relation over a set $D$. For $1\le i_1<\dots
  <i_\ell \le r$, the {\em projection of $R$} onto the components
  $i_1,\ldots,i_\ell$ is the relation $R^{|i_1,\ldots,i_\ell}$ which
  contains an $\ell$-tuple $(d'_1,\dots,d'_\ell)\in D^\ell$ if and
  only if there is a $k$-tuple $(d_1,\ldots,d_k)\in R$ such that
  $d_j'=d_{i_j}$ for $1\le j\le\ell$.
\end{defin}
Intuitively, a tuple is in $R^{|i_1,\ldots,i_\ell}$ if it can be
extended into a tuple in $R$.
\begin{defin}\label{def:induced-instance}
  Let $I=(V,D,C)$ be a CSP instance and let $V'\subseteq V$ be a
  nonempty subset of variables. The \emph{CSP instance $I[V']$ induced
    by $V'$} is $I'=(V',D,C')$, where $C'$ is defined in the following
  way: for each constraint $c=\langle (v_1,\dots,v_k),R\rangle$ having
  at least one variable in $V'$, there is a corresponding constraint
  $c'$ in $C'$. Suppose that $v_{i_1},\ldots,v_{i_\ell}$ are the
  variables among $v_1,\ldots,v_k$ that are in $V'$. Then the
  constraint $c'$ is defined as $\langle (v_{i_1},\dots,
  v_{i_\ell}),R^{|i_1,\ldots,i_\ell}\rangle$, that is, the relation is
  the projection of $R$ onto the components $i_1,\ldots,i_\ell$.
\end{defin}

Thus an assignment $\alpha$ on $V'$ satisfies $I[V']$ if for each
constraint $c$ of $I$, there is an assignment extending $\alpha$ that
satisfies $c$ (however, it is not necessarily true that there is an
assignment extending $\alpha$ that satisfies every constraint of $I$
simultaneously). Note that that the hypergraph of the induced instance
$I[V']$ is exactly the induced subhypergraph $H_I[V']$.
\begin{theorem}\label{theo:enumerate-solution}
  The solutions of a CSP instance $I$ can be enumerated in time $\| I \| ^
  {\rho^*(H_I)+O(1)}$.
\end{theorem}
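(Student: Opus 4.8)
The plan is to enumerate an arbitrary fixed ordering $v_1,\dots,v_n$ of $V$ and to compute, for each prefix $V_j=\{v_1,\dots,v_j\}$, the set $S_j$ of all solutions of the induced instance $I[V_j]$ (Definition~\ref{def:induced-instance}). We process the prefixes in increasing order of $j$. We start with $S_1$, which we obtain by taking one of the constraints containing $v_1$, projecting its relation onto the coordinate of $v_1$ (Definition~\ref{def:projection}), and intersecting over all such constraints; this is trivially a set of at most $\|I\|$ singleton assignments. Given $S_{j-1}$, we obtain $S_j$ as follows: for every $\alpha\in S_{j-1}$ and every $d\in D$, we form the candidate assignment $\alpha'=\alpha\cup\{v_j\mapsto d\}$ on $V_j$ and check, in polynomial time, whether $\alpha'$ satisfies every constraint $c'$ of $I[V_j]$. (It suffices to check the constraints of $I[V_j]$ that actually involve $v_j$, since the other constraints of $I[V_j]$ are already implied by $\alpha$ being a solution of $I[V_{j-1}]$; but checking all of them is fine and still polynomial.) The assignments that pass the test are exactly the elements of $S_j$: indeed $\alpha'$ is a solution of $I[V_j]$ iff its restriction to $V_{j-1}$ is a solution of $I[V_{j-1}]$ and $\alpha'$ satisfies the constraints of $I[V_j]$. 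After $n$ rounds, $S_n$ is the set of solutions of $I=I[V]$, which we output.

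For correctness it remains to argue that $S_{j-1}$ is exactly $\{\alpha|_{V_{j-1}} \mid \alpha\in S_j\}$-extendable in the right sense, i.e.\ that restricting a solution of $I[V_j]$ to $V_{j-1}$ yields a solution of $I[V_{j-1}]$, and conversely every solution of $I[V_{j-1}]$ arises this way \emph{only when it can be extended}; the ``only extendable ones survive'' phenomenon is harmless because we are computing $S_j$ directly, not trying to recover all of $S_{j-1}$ from $S_j$. The key point is simply that the relations of $I[V_{j-1}]$ are the projections onto $V_{j-1}$ of the relations of $I[V_j]$ (and ultimately of $I$), so a tuple survives in the projection precisely when it extends; this is exactly the remark following Definition~\ref{def:induced-instance}.

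For the running time, the main tool is Lemma~\ref{lem:count-bound} applied to each induced instance. The hypergraph of $I[V_j]$ is $H_I[V_j]$, the subhypergraph of $H_I$ induced by $V_j$. The crucial monotonicity fact is that $\rho^*(H_I[V_j])\le \rho^*(H_I)$: any fractional edge cover $x$ of $H_I$ restricts (via $e\mapsto e\cap V_j$, summing weights of hyperedges that map to the same trace) to a fractional edge cover of $H_I[V_j]$ of the same weight, since every vertex of $V_j$ still has total weight at least $1$. Hence, writing $\rho^*=\rho^*(H_I)$ and noting that every constraint relation of $I[V_j]$ is a projection of one of $I$ and so has at most $\|I\|$ tuples, Lemma~\ref{lem:count-bound} gives $|S_j|\le \|I\|^{\rho^*}$ for every $j$. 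The work in round $j$ is: enumerate $|S_{j-1}|\cdot|D|\le \|I\|^{\rho^*}\cdot\|I\|$ candidates, and for each one run a polynomial-time satisfaction check against $I[V_j]$, which itself can be constructed in polynomial time. Thus each round costs $\|I\|^{\rho^*+O(1)}$, and over $n\le\|I\|$ rounds the total is $\|I\|^{\rho^*+O(1)}$, as claimed. The one point that requires a little care — and is the main (though mild) obstacle — is precisely the monotonicity $\rho^*(H_I[V']) \le \rho^*(H_I)$ together with the bound on the number of tuples in the projected relations, since without it the intermediate sets $S_j$ could a priori be much larger than the final solution set and the bound on the running time would break down.
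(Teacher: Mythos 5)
Your proposal is correct and follows essentially the same route as the paper's own proof: incrementally build the solution lists of the induced instances $I[V_j]$ by extending each partial solution in all $|D|$ ways and filtering, with correctness from the fact that restrictions of solutions of $I[V_j]$ are solutions of $I[V_{j-1}]$, and the running time controlled by the monotonicity $\rho^*(H_I[V_j])\le\rho^*(H_I)$ combined with Lemma~\ref{lem:count-bound}. You correctly identified the same crucial point the paper relies on, namely bounding the intermediate lists via the fractional edge cover number of the induced subhypergraphs.
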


\begin{proof}
Let $V=\{v_1,\dots,
  v_n\}$ be an arbitrary ordering of the variables of $I$ and let
  $V_i$ be the subset $\{v_1,\dots, v_i\}$.  
  For $i=1,2,\dots, n$, the algorithm creates a list $L_i$ containing the
  solutions of $I[V_i]$. Since $I[V_n]=I$, the list $L_n$ is exactly what we want.

For $i=1$, the instance $I[V_i]$ has at most $|D|$ solutions, hence the
list $L_i$ is easy to construct.  Notice that a solution of $I[V_{i+1}]$
induces a solution of $I[V_i]$. Therefore, the list $L_{i+1}$ can be
constructed by considering the solutions in $L_i$, extending them to
the variable $v_{i+1}$ in all the $|D|$ possible ways, and checking
whether this assignment is a solution of $I[V_{i+1}]$. Clearly, this can
be done in $|L_i|\cdot|D|\cdot \| I[V_{i+1}] \|^{O(1)}=|L_i|\cdot \| I
\|^{O(1)}$ time. By repeating this procedure for $i=1,2,\dots, n-1$, the list
$L_n$ can be constructed.

The total running time of the algorithm can be bounded by
$\sum_{i=1}^{n-1}|L_i|\cdot \| I \|^{O(1)}$. Observe that
$\rho^*(H_{I[V_i]})\le \rho^*(H_{I})$: $H_{I[V_i]}$ is the
subhypergraph of $H_I$ induced by $V_i$, thus any fractional cover of
the hypergraph of $I$ gives a fractional cover of $I[V_i]$ (for every
edge $e\in E(H_{I[V_i]})$, we set the weight of $e$ to be the sum of
the weight of the edges $e'\in E(H_{I})$ with $e'\cap V_i=e$).
Therefore, by Lemma~\ref{lem:count-bound}, $|L_i|\le \| I \|
^{\rho^*(H_I)}$, and it follows that the total running time is $\| I
\| ^{\rho^*(H_I)+O(1)}$.
\end{proof}

We note that the algorithm of Theorem~\ref{theo:enumerate-solution}
does not actually need a fractional edge cover: the fact that the
hypergraph has small fractional edge cover number is used only in
proving the time bound of the algorithm. By a significantly more
complicated algorithm, Ngo et al.~\cite{Ngo:2012:WOJ:2213556.2213565}
improved Theorem~\ref{theo:enumerate-solution} by removing the $O(1)$
term from the exponent.

\begin{corollary}
  Let $\mathcal H$ be a class of hypergraphs of bounded fractional edge cover
  number. Then $\CSP(\mathcal H)$ is in polynomial time.
\end{corollary}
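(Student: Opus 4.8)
The plan is to read the corollary off directly from Theorem~\ref{theo:enumerate-solution}. Since $\mathcal H$ is of bounded fractional edge cover number, I would first fix a constant $r$ with $\rho^*(H)\le r$ for all $H\in\mathcal H$. Now take any instance $I$ of $\CSP(\mathcal H)$; by definition $H_I\in\mathcal H$, so $\rho^*(H_I)\le r$. Applying Theorem~\ref{theo:enumerate-solution}, the algorithm there enumerates all solutions of $I$ in time $\|I\|^{\rho^*(H_I)+O(1)}\le\|I\|^{r+O(1)}$. Because $r$ is a fixed constant not depending on $I$, this is a polynomial in $\|I\|$. To decide $\CSP(\mathcal H)$ one simply runs this enumeration and answers ``yes'' iff at least one solution is produced — indeed it suffices to halt as soon as the first solution appears.

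It is worth recording the point made in the remark after Theorem~\ref{theo:enumerate-solution}: the enumeration algorithm does not receive a fractional edge cover, nor the value $\rho^*(H_I)$, as part of its input. It merely orders the variables, builds the lists $L_i$ of solutions of the induced instances $I[V_i]$, and extends them one variable at a time. The hypothesis $\rho^*(H_I)\le r$ enters only in the running-time analysis, where Lemma~\ref{lem:count-bound} gives $|L_i|\le\|I\|^{\rho^*(H_{I[V_i]})}\le\|I\|^{\rho^*(H_I)}\le\|I\|^{r}$. Hence no preprocessing to compute or even approximate $\rho^*$ is required, and the fact that membership in $\mathcal H$ is treated as a promise causes no difficulty: on a legal input the algorithm is simply guaranteed to run in polynomial time.

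There is essentially no obstacle to overcome here; the only thing being exploited beyond Theorem~\ref{theo:enumerate-solution} is the elementary observation that bounding the exponent $\rho^*(H_I)$ by a constant converts the stated running time $\|I\|^{\rho^*(H_I)+O(1)}$ into a genuine polynomial bound, uniform over the class $\mathcal H$.
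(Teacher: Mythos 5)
Your proposal is correct and matches the paper's (implicit) argument exactly: the corollary is read off from Theorem~\ref{theo:enumerate-solution} by noting that the exponent $\rho^*(H_I)\le r$ is a constant over the class, so the enumeration algorithm runs in polynomial time and one answers ``yes'' iff a solution is found. Your added observation that the algorithm needs neither a fractional edge cover nor the value of $\rho^*$ as input is precisely the point the paper makes in the remark following the theorem, and it correctly disposes of the promise-problem issue.
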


We conclude this section by pointing out that
Lemma~\ref{lem:count-bound} is tight: there are arbitrarily large
instances $I$ where every constraint relation
contains at most $N$ tuples and the number of solutions is exactly
$N^{\rho^*(H_I)}$.  A similar proof appeared first in
\cite{atserias-focs2008-joins} in the context of database queries, but
we restate it here in the language of CSPs for the convenience of the
reader.

\begin{theorem}\label{th:dual}
  Let $H$ be a hypergraph. For every $N_0\ge 1$, there is a CSP
  instance $I=(V,D,C)$ with hypergraph $H$ where every constraint
  relation contains at most $N\ge N_0$ tuples and $I$ has at least
  $N^{\rho^*(H)}$ solutions.
\end{theorem}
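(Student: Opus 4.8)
The plan is to exploit linear programming duality for the fractional edge cover number. Recall that $\rho^*(H)$ is the optimum of the linear program $\min \sum_{e} x(e)$ subject to $\sum_{e \ni v} x(e) \ge 1$ for all $v \in V(H)$, $x \ge 0$. Its LP dual is the \emph{fractional matching} problem: $\max \sum_{v} y(v)$ subject to $\sum_{v \in e} y(v) \le 1$ for all $e \in E(H)$, $y \ge 0$. By LP duality the optimum of the dual is also $\rho^*(H)$, and by standard results we may take an optimal dual solution $y$ with rational values; write $y(v) = a_v / b$ with $a_v, b$ nonnegative integers sharing a common denominator $b$.

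The construction is then the following. Set the domain to be $D = \{1, \dots, \lceil N_0^{1/b} \rceil\}$, so that $|D|^b \ge N_0$. For each vertex $v \in V(H)$, think of the assignment to $v$ as recording an element of $D^{a_v}$ (equivalently, we blow up each variable into $a_v$ ``independent coordinates''; if some $a_v = 0$ the variable carries a single forced value). For each hyperedge $e = \{v_1, \dots, v_k\} \in E(H)$ we define a constraint whose relation $R_e$ is the full product of the coordinate-sets of the variables in $e$, i.e. $R_e = D^{\sum_{v \in e} a_v}$. Because $y$ is a feasible dual solution, $\sum_{v \in e} a_v \le b$, so $|R_e| = |D|^{\sum_{v \in e} a_v} \le |D|^{b} =: N$, and $N \ge N_0$ by the choice of $|D|$. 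Since every constraint relation is a full Cartesian product, there is no interaction between the coordinates: every way of independently choosing, for each vertex $v$, an element of $D^{a_v}$ yields a solution. Hence the number of solutions is exactly
\[
\prod_{v \in V(H)} |D|^{a_v} \;=\; |D|^{\sum_{v} a_v} \;=\; \big(|D|^{b}\big)^{\sum_v (a_v/b)} \;=\; N^{\sum_v y(v)} \;=\; N^{\rho^*(H)},
\]
using that $y$ is an \emph{optimal} dual solution. This gives the claimed lower bound (with equality, in fact).

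One technical point to handle carefully is that the hypergraph of the constructed instance must be exactly $H$, not a coarsening of it: writing each constraint over the tuple of original variables $(v_1, \dots, v_k)$ with relation $R_e$ (after grouping the per-vertex coordinate blocks) keeps the hyperedge equal to $e$ as a set of variables, even when some $a_v = 0$ — such a variable still appears in the tuple, it is merely constrained to a single value, so the hyperedge is unchanged. A second routine point is the rounding in the choice of $|D|$: with $|D| = \lceil N_0^{1/b}\rceil$ we get $N = |D|^b \ge N_0$, which is all that is required. I do not anticipate a genuine obstacle here; the only ``idea'' is invoking LP duality to convert a covering optimum into a matching optimum, and everything else is the verification that a product-of-full-relations instance has the obvious product-sized solution set. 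This is exactly the Atserias–Grohe–Marx tightness argument recast for CSPs, as the surrounding text indicates.
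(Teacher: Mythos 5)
Your proposal is correct and follows essentially the same route as the paper: take a rational optimal solution of the dual LP (which the paper calls a fractional independent set), give each variable an effective domain of size about $N^{y(v)}$, and make every constraint relation a full product, so that dual feasibility bounds each relation by $N$ tuples and dual optimality yields $N^{\rho^*(H)}$ solutions. The only differences are cosmetic: the paper sets $D=[N_0^q]$ and $N=N_0^q$ exactly (restricting variable $v$ to $[N_0^{p_v}]\subseteq D$) instead of rounding $N_0^{1/b}$ and blowing variables up into coordinate blocks.
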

\begin{proof}
  A {\em fractional independent set} of hypergraph $H$ is an assignment
  $y: V(H)\to [0,1]$ such that $\sum_{v\in e}y(v)\le 1$ for every
  $e\in E(H)$. The {\em weight} of $y$ is $\sum_{v\in V(H)}y(H)$. The {\em
    fractional independent set number} $\alpha^*(H)$ is the maximum
  weight of a fractional independent set of $H$. It is a well-known
  consequence of linear-programming duality that
  $\alpha^*(H)=\rho^*(H)$ for every hypergraph $H$, since the two
  values can be expressed by a pair of primal and dual linear programs \cite[Section 30.10]{MR1956925}.

  Let $y$ be a fractional independent set of weight $\alpha^*(H)$. By
  standard results of linear programming, we can assume that $y$ is
  rational, that is, there is an integer $q\ge 1$ such that for every
  $v\in V(H)$, $y(v)=p_v/q$ for some nonnegative integer $p_v$.  
  We define a CSP instance $I=(V,D,C)$ with $V=V(H)$ and $D=[N_0^q]$ such that for every $e\in E(H)$
  where $e=\{v_1,\dots,v_r\}$, there is a constraint
  $\langle(v_1,\dots,v_r),R_e\rangle$ with
\[
R_e=\{ (a_1,\dots,a_r) \mid \text{$a_i\in [N_0^{p_v}]$ for every $1\le i \le r$}\}.
\]
Let $N=N_0^q$. We claim that $R_e$ contains at most $N$ tuples. Indeed, the number of tuples in $R_e$ is exactly
\[
\prod_{v\in{e}}N_0^{p_v}=N_0^{\sum_{v\in{e}}p_v}=
 N_0^{q\cdot {\sum_{v\in{e}}p_v/q}}=(N_0^q)^{\sum_{v\in e}y(v)}\le N_0^q=N,
\]
since $y$ is a fractional independent set.  Observe that $\alpha:
V(H)\to D$ is a solution if and only if $\alpha(v)\in [N_0^{p_v}]$ for
every $v\in V(H)$.  Hence the number of solutions is exactly
  \[
\prod_{v\in{V(H)}}N_0^{p_v}=N_0^{\sum_{v\in{V(H)}}p_v}=
 N_0^{q\cdot {\sum_{v\in{V(H)}}p_v/q}}=(N_0^q)^{\alpha^*(H)}=N^{\alpha^*(H)}=N^{\rho^*(H)},
  \]
as required.
\end{proof}
The significance of this result is that it shows that there is no
``better'' measure than fractional edge cover number that guarantees a
polynomial bound on the number of solutions, in the following formal sense.
Let $w(H)$ be a width measure that guarantees a polynomial bound: that
is, if $I$ is a CSP instance where every relation has at most $N$
tuples, then $I$ has at most $N^{w(H)}$ solutions for some function
$f$. Then by Theorem~\ref{th:dual}, we have $\rho^*(H)\le w(H)$. This
means the upper bound on the number of solutions given by $w(H)$
already follows from the bound given by Lemma~\ref{lem:count-bound}
and hence $\rho^*(H)$ can be considered a stronger measure.

\section{Fractional hypertree decompositions}
Let $H$ be a hypergraph.  A \emph{generalized hypertree decomposition
  of $H$}~\cite{gotleosca02} is a triple $(T,(B_t)_{t \in V(T)},$ $(C_t)_{t \in V(T)})$,
where $(T,(B_t)_{t \in V(T)})$ is a tree decomposition of $H$ and
$(C_t)_{t \in V(T)}$ is a family of subsets of $E(H)$ such that for
every $t\in V(T)$ we have $B_t \subseteq \bigcup C_t$. Here $\bigcup
C_t$ denotes the union of the sets (hyperedges) in $C_t$, that is, the
set $\{v\in V(H)\mid\exists e\in C_t:\;v\in e\}$. We call the sets
$B_t$ the \emph{bags} of the decomposition and the sets $C_t$ the
\emph{guards}. The \emph{width} of $(T,(B_t)_{t \in V(T)},(C_t)_{t \in
  V(T)})$ is $\max\{|C_t|\mid t\in V(T)\}$.  The \emph{generalized
  hypertree width} $\ghw(H)$ of $H$ is the minimum of the widths of
the generalized hypertree decompositions of $H$.  The {\em edge cover
  number} $\rho(H)$ of a hypergraph is the minimum number of edges
needed to cover all vertices; it is easy to see that $\rho(H)\ge
\rho^*(H)$. Observe that the size of $C_t$ has to be at least
$\rho(H[B_t])$ and, conversely, for a given $B_t$ there is always a
suitable guard $C_t$ of size $\rho(H[B_t])$. Therefore, $\ghw(H)\le r$
if there is a tree decomposition where 
$\rho(H[B_t])\le r$ for every $t\in V(T)$.

For the sake of completeness, let us mention that a \emph{hypertree decomposition} of $H$ is a generalized hypertree
decomposition $(T,(B_t)_{t \in V(T)},(C_t)_{t \in V(T)})$ that satisfies the
following additional \emph{special condition}:
$(\bigcup C_t) \cap\bigcup_{u\in V(T_t)}B_{u}\subseteq B_t$ for all $t\in
V(T)$. Recall that $T_t$ denotes the subtree of the $T$ with root $t$. The
\emph{hypertree width} $\hw(H)$ of $H$ is the minimum of the widths of all
hypertree decompositions of $H$. It has been proved in \cite{DBLP:journals/ejc/AdlerGG07} that
$\ghw(H)\le\hw(H)\le3\cdot\ghw(H)+1$. This means that for our purposes,
hypertree width and generalized hypertree width are equivalent. For
simplicity, we will only work with generalized hypertree width.

Observe that for every hypergraph $H$ we have
$\ghw(H)\le\tw(H)+1$. Furthermore, if $H$ is a hypergraph 
with $V(H)\in E(H)$ we have $\ghw(H)=1$
and $\tw(H)=|V(H)|-1$. 

We now give an approximate characterization of (generalized) hypertree
width by a game that is a variant of the \emph{cops and robber} game
\cite{seytho93}, which characterizes tree width: In the \emph{robber
  and marshals game on $H$} \cite{gotleosca03}, a robber plays against
$k$ marshals.  The marshals move on the hyperedges of $H$, trying to
catch the robber. Intuitively, the marshals occupy all vertices of the
hyperedges where they are located. In each move, some of the marshals fly
in helicopters to new hyperedges. The robber moves on the vertices of
$H$. She sees where the marshals will be landing and quickly tries to
escape, running arbitrarily fast along paths of $H$, not being allowed
to run through a vertex that is occupied by a marshal before
\emph{and} after the flight (possibly by two different marshals). The
marshals' objective is to land a marshal via helicopter on a hyperedge
containing the vertex occupied by the robber. The robber tries to
elude capture. The \emph{marshal width} $\mw(H)$ of a hypergraph $H$
is the least number $k$ of marshals that have a winning strategy in
the robber and marshals game played on $H$ (see \cite{adl04} or
\cite{gotleosca03} for a formal definition).

It is easy to see that $\mw(H)\le\ghw(H)$ for every hypergraph $H$. To win the
game on a hypertree of generalized hypertree width $k$, the marshals always
occupy guards of a decomposition and eventually capture the robber at a leaf
of the tree. Conversely, it can be proved that $\ghw(H)\le 3\cdot\mw(H)+1$ \cite{DBLP:journals/ejc/AdlerGG07}.
  
Observe that for every hypergraph $H$, the generalized hypertree width
$\ghw(H)$ is less than or equal to the edge cover number $\rho(H)$:
hypergraph $H$ has a generalized hypertree decomposition consisting of
a single bag containing all vertices and having a guard of size
$\rho(H)$. On the other hand, the following two examples show that
hypertree width and {\em fractional} edge cover number are
incomparable.

\begin{example}\label{exa:ghw<rho*}
  Consider the class of all graphs that only have disjoint edges. The
  tree width and hypertree width of this class is $1$, whereas the
  fractional edge cover number is unbounded.
\end{example}

\begin{example}\label{exa:rho*<ghw}
  For $n\ge 1$, let $H_n$ be the following hypergraph: $H_n$ has a vertex
  $v_S$ for every subset $S$ of $\{1,\ldots,2n\}$ of cardinality
  $n$. Furthermore, for every $i\in\{1,\ldots,2n\}$ the hypergraph $H_n$ has a
  hyperedge $e_i=\{v_S\mid i\in S\}$.
  
  Observe that the fractional edge cover number $\rho^*(H_n)$ is at most $2$,
  because the mapping $x$ that assigns $1/n$ to every hyperedge $e_i$ is a
  fractional edge cover of weight $2$. Actually, it is easy to see that
  $\rho^*(H_n)=2$.
  
  We claim that the hypertree width of $H_n$ is $n$. We show that
  $H_n$ has a hypertree decomposition of width $n$.  Let
  $S_1=\{1,\dots,n\}$ and $S_2=\{n+1,\dots,2n\}$. We construct a
  generalized hypertree decomposition for $H_n$ with a tree $T$ having
  two nodes $t_1$ and $t_2$. For $i=1,2$, we let $B_{t_1}$ contain a
  vertex $V_S$ if and only if $S\cap S_i\neq \emptyset$. For each edge
  $e_j\in E(H_n)$, there is a bag of the decomposition that contains
  $e_j$: if $j\in S_i$, then $B_{t_i}$ contains every vertex of
  $e_j$. We set the guard $C_{t_i}$ to contain every $e_j$ with $j\in
  S_i$. It is clear that $|C_{t_i}|=n$ and $C_{t_i}$ covers $B_{t_i}$:
  vertex $v_S$ is in $B_{t_i}$ only if there is a $j\in S\cap S_i$, in
  which case $e_j\in C_{t_i}$ covers $v_S$.  Thus this is indeed a
  generalized hypertree decomposition of width $n$ for $H_n$ and
  $\ghw(H_n)\le n$ follows.

To see that $\ghw(H_n)>n-1$, we argue that the robber
  has a winning strategy against $(n-1)$ marshals in the robber and marshals
  game.  Consider a position of the game where the marshals occupy edges
  $e_{j_1},\ldots,e_{j_{n-1}}$ and the robber occupies a vertex $v_S$ for a
  set $S$ with $S\cap\{j_1,\ldots,j_{n-1}\}=\emptyset$. Suppose that in the
  next round of the game the marshals move to the edges
  $e_{k_1},\ldots,e_{k_{n-1}}$. Let $i\in
  S\setminus\{k_1,\ldots,k_{n-1}\}$. The robber moves along the edge $e_i$ to
  a vertex $v_{R}$ for a set
  $R\subseteq\{1,\ldots,2n\}\setminus\{k_1,\ldots,k_{n-1}\}$ of cardinality
  $n$ that contains $i$. If she plays this way, she can never be captured.
\end{example}

For a hypergraph $H$ and a mapping $\gamma:E(H)\to[0,\infty)$, we let
\[
B(\gamma)=\{v\in
V(H)\mid\sum_{e\in E(H),v\in e}\gamma(e)\ge 1\}.
\] 
We may think of $B(\gamma)$
as the set of all vertices ``blocked'' by $\gamma$.
Furthermore, we let 
$\weight(\gamma)=\sum_{e\in E}\gamma(e)$.

\begin{defin}
  Let $H$ be a hypergraph. A \emph{fractional hypertree decomposition of $H$}
  is a triple $(T,$ $(B_t)_{t \in V(T)}$, $(\gamma_t)_{t \in V(T)})$, where $(T,(B_t)_{t \in V(T)})$ is a tree decomposition of $H$ and
  $(\gamma_t)_{t \in V(T)}$ is a family of mappings from $E(H)$ to
  $[0,\infty)$ such that for every $t\in V(T)$ we have $B_t\subseteq
  B(\gamma_t)$. 
  
  We call the sets $B_t$ the \emph{bags} of the decomposition and the mappings
  $\gamma_t$ the \emph{(fractional) guards}.
  
  The \emph{width} of $(T,(B_t)_{t \in V(T)},(\gamma_t)_{t \in V(T)})$ is
  $\max\{\weight(\gamma_t)\mid t\in V(T)\}$. The \emph{fractional hypertree
    width} $\fhw(H)$ of $H$ is the minimum of the widths of the fractional
  hypertree decompositions of $H$. Equivalently, $\fhw(H)\le r$ if $H$ has a tree decomposition where $\rho^*(B_t)\le r$ for every bag $B_t$.
\end{defin}

It is easy to see  that the minimum of the widths of all fractional hypertree
decompositions of a hypergraph $H$ always exists and is rational. This follows
from the fact that, up to an obvious equivalence, there are only finitely many
tree decompositions of a hypergraph.

Clearly, for every hypergraph $H$ we have 
\[
\fhw(H)\le \rho^*(H)\quad\text{and}\quad\fhw(H)\le\ghw(H).
\]
Examples \ref{exa:ghw<rho*} and \ref{exa:rho*<ghw} above show that there are
families of hypergraphs of bounded fractional hypertree width, but unbounded
fractional edge cover number and unbounded generalized hypertree width.

It is also worth pointing out that for every hypergraph $H$,
\[
\fhw(H)=1\iff\ghw(H)=1.
\]
To see this, note that if $\gamma:E(H)\to[0,\infty)$ is a mapping with
$\weight(\gamma)=1$ and $B\subseteq B(\gamma)$, then $B\subseteq e$ for all
$e\in E(H)$ with $\gamma(e)>0$. Thus instead of using $\gamma$ as a guard in a
fractional hypertree decomposition,
we may use the integral guard $\{e\}$ for any $e\in E(H)$ with
$\gamma(e)>0$.
Let us remark that $\ghw(H)=1$ if and only if $H$ is acyclic
\cite{gotleosca02}.

\subsection{The robber and army game}
As robbers are getting ever more clever, it takes more and more powerful
security forces to capture them.  In the \emph{robber and army game on a
  hypergraph $H$}, a robber plays against a general commanding an army of $r$
battalions of soldiers. The general may distribute his soldiers arbitrarily on
the hyperedges. However, a vertex of the hypergraph is only blocked if the
number of soldiers on all hyperedges that contain this vertex adds up to the
strength of at least one battalion. The game is then played like the robber
and marshals game.

\begin{defin}
  Let $H$ be a hypergraph and $r$ a nonnegative real. The \emph{robber
    and army game on $H$ with $r$ battalions} (denoted by $\RA(H,r)$)
  is played by two players, the \emph{robber} and the
  \emph{general}. A \emph{position} of the game is a pair
  $(\gamma,v)$, where $v\in V(H)$ and $\gamma:E(H)\to[0,\infty)$ with
  $\weight(\gamma)\le r$. To start a game, the robber
  picks an arbitrary $v_0$, and the initial position is $(0,v_0)$,
  where $0$ denote the constant zero mapping.

  In each round, the players move from the current position
  $(\gamma,v)$ to a new position $(\gamma',v')$ as follows: The general
  selects $\gamma'$, and then the robber selects $v'$ such that there is a
  path from $v$ to $v'$ in the hypergraph 
  $
  H\setminus(B(\gamma)\cap B(\gamma')\big).
  $

  If a position $(\gamma,v)$ with $v\in B(\gamma)$ is reached, the play ends
  and the general wins.  If the play continues forever, the robber
  wins.
  
  The \emph{army width} $\aw(H)$ of $H$ is the least $r$ such that the general
  has winning strategy for the game $\RA(H,r)$.
\end{defin}

Again, it is easy to see that $\aw(H)$ is well-defined and rational
(observe that two positions $(\gamma_1,v)$ and $(\gamma_2,v)$ are
equivalent if $B(\gamma_1)=B(\gamma_2)$ holds).

\begin{theorem}\label{theo:game}
  For every hypergraph $H$,
  \[
  \aw(H)\le\fhw(H)\le3\cdot\aw(H)+2.
  \]
\end{theorem}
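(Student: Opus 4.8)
The plan is to prove the two inequalities separately. The bound $\aw(H)\le\fhw(H)$ is the routine direction, obtained by turning a fractional hypertree decomposition into a general's strategy. The bound $\fhw(H)\le 3\aw(H)+2$ is the substantial one, and I would obtain it by transcribing the proof that $\ghw(H)\le 3\mw(H)+1$ from \cite{DBLP:journals/ejc/AdlerGG07} into the fractional setting, with the set $B(\gamma)$ blocked by a fractional guard $\gamma$ playing the role that the vertex set $\bigcup C$ covered by a set $C$ of hyperedges plays for the marshals, and with sums of fractional guards replacing unions of hyperedge sets.

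For $\aw(H)\le\fhw(H)$, fix a fractional hypertree decomposition $(T,(B_t)_{t\in V(T)},(\gamma_t)_{t\in V(T)})$ of $H$ of width $r=\fhw(H)$, root $T$ at a node $t_0$, and let the general ``descend the tree chasing the robber'' by playing the fractional guards. After the general has played $\gamma_t$, I would maintain the invariant that either the robber has already been caught (her vertex lies in $B(\gamma_t)$) or her vertex $v$ lies in $\bigcup_{u\in V(T_t)}B_u\setminus B(\gamma_t)$; this holds at $t_0$ because every vertex lies in some bag. To continue, the standard tree-decomposition separation fact shows that the connected component of $H\setminus B_t$ containing $v$ is contained in $\bigcup_{u\in V(T_{t'})}B_u$ for a unique child $t'$ of $t$, and the general plays $\gamma_{t'}$. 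Since $B_t\cap B_{t'}\subseteq B(\gamma_t)\cap B(\gamma_{t'})$ and $B_t\cap B_{t'}$ separates $\bigcup_{u\in V(T_{t'})}B_u$ from the rest of $\underline H$, the robber's new vertex $v'$ stays inside $\bigcup_{u\in V(T_{t'})}B_u$; if $v'\in B(\gamma_{t'})$ she is caught, and otherwise the invariant is re-established at $t'$. As $T$ is finite the descent reaches a leaf $t$, where $\bigcup_{u\in V(T_t)}B_u=B_t\subseteq B(\gamma_t)$ forces a capture; hence $\aw(H)\le r$.

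For $\fhw(H)\le 3\aw(H)+2$, fix a winning strategy for the general in $\RA(H,r)$ with $r=\aw(H)$ and read a fractional hypertree decomposition off the tree of all plays consistent with this strategy, which is finite because the strategy wins and the branching is finite. It is convenient to rephrase the game in terms of the connected ``region'' to which the robber is confined, so that each node $s$ of the play tree carries the fractional guard $\gamma_s$ played into it and the region only shrinks down each branch. The decomposition would use essentially this tree, with the bag $B_s$ taken to be the union of $B(\gamma)$ over a window of at most three consecutive guards near $s$ along its branch. Such a bag is blocked by the sum $\gamma$ of those guards, and since a sum of at most three positions of $\RA(H,r)$ has weight at most $3r$ and $B(\gamma_1+\gamma_2+\gamma_3)\supseteq B(\gamma_1)\cup B(\gamma_2)\cup B(\gamma_3)$, a window of length three yields a valid fractional guard of weight at most $3r$; a further additive term of at most $2$ absorbs the adjustments needed so that every hyperedge of $H$ lies in some bag and the root is handled correctly. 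The transition rule of the game --- that between consecutive positions with guards $\gamma$ and then $\gamma'$ the robber must avoid $B(\gamma)\cap B(\gamma')$ --- is exactly what makes these windowed bags satisfy the separation condition of a tree decomposition.

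The main obstacle will be the connectivity axiom of the tree decomposition. Because the robber--army game is not known to be monotone --- the set of vertices the general keeps blocked may shrink and then grow again as battalions are redistributed --- a vertex may leave the robber's region and later re-enter it, so for the naive choice $B_s=B(\gamma_s)$ the set of bags containing a fixed vertex can be disconnected. Repairing this is precisely where the multiplicative factor $3$ in the statement comes from, and it is the technical heart of the argument; I expect to carry it out by following the corresponding step of \cite{DBLP:journals/ejc/AdlerGG07}, noting that the only property of marshal positions used there is that boundedly many of them together block a vertex set that is again blocked by a position of bounded weight --- a property that sums of fractional guards share, which is what lets the integral argument go through verbatim with ``weight'' in place of ``number of hyperedges.''
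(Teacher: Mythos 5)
Your first inequality is essentially the paper's own argument and is fine. The problem is the second inequality. The paper does \emph{not} build the decomposition from the tree of plays of a winning strategy. It first proves a separator lemma: if $\aw(H)\le r$, then every weight function $\gamma:E(H)\to[0,\infty)$ admits a \emph{balanced separator} $\sigma$ of weight at most $r$, i.e.\ one with $\weight(\gamma|R)\le\weight(\gamma)/2$ for every connected component $R$ of $H\setminus B(\sigma)$ (proved by contraposition: if some $\gamma$ has no such separator, a robber who always sits in the unique heavy component is never caught). It then constructs the decomposition top-down by induction on $|V(H)\setminus B(\gamma)|$: given $\gamma$ of weight at most $2r+2$, take a balanced separator $\sigma$ for it, make $B(\gamma+\sigma)$ the root bag (weight at most $3r+2$), and recurse on each component $R_i$ of $H\setminus B(\gamma+\sigma)$ with a new weight function consisting of one unit on an edge meeting $R_i$, plus $\sigma$, plus $\gamma$ restricted to the component of $H\setminus B(\sigma)$ containing $R_i$; this has weight at most $1+r+(r+1)=2r+2$ precisely because $\sigma$ is balanced, so the induction closes. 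The factor $3$ is the sum $(2r+2)+r$, not a window of three consecutive guards.

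Your sketch of the hard direction --- read bags off the play tree, take unions of $B(\gamma)$ over windows of three consecutive positions, and let the transition rule supply connectivity --- is exactly the step that fails for non-monotone games. You correctly identify non-monotonicity as the obstacle, but then defer to \cite{DBLP:journals/ejc/AdlerGG07} for the repair while misdescribing what that repair is: a vertex can be blocked, released, and re-blocked arbitrarily many times along a single branch, and distinct branches can revisit the same vertices in incompatible ways, so no fixed-length windowing along branches restores the connectivity axiom; nor does your sketch ensure that every hyperedge of $H$ lands in some bag. The argument in \cite{DBLP:journals/ejc/AdlerGG07} that you are appealing to is itself a balanced-separator recursion in the style of \cite{ree97,seytho93}, not a windowed strategy tree. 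So the technical heart of the theorem --- converting a possibly non-monotone winning strategy for the general into a tree decomposition --- is absent from your proposal; you would need to state and prove the balanced-separator lemma and run the recursive construction on components, at which point the play tree plays no role at all.
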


The rest of this subsection is devoted to a proof of this theorem.
The proof is similar to the proof of the corresponding result for the robber
and marshal game and generalized hypertree width in \cite{DBLP:journals/ejc/AdlerGG07}, which
in turn is based on ideas from \cite{ree97,seytho93}.

Let $H$ be a hypergraph and $\gamma,\sigma:E(H)\to[0,\infty)$. For a set
$W\subseteq V(H)$, we let
\[
\weight(\gamma|W)=\sum_{\substack{e\in E(H)\\ e\cap
    W\neq\emptyset}}\gamma(e).
\]
A mapping $\sigma:E(H)\to[0,\infty)$ is a
\emph{balanced separator for $\gamma$} if for every connected component $R$ of
$H\setminus B(\sigma)$,
\[
\weight(\gamma|R)\le\frac{\weight(\gamma)}{2}.
\]

\begin{lemma}\label{lem:sep}
  Let $H$ be a hypergraph with $\aw(H)\le r$ for some nonnegative real $r$. Then every $\gamma:E(H)\to[0,\infty)$ has a balanced separator
  of weight $r$.
\end{lemma}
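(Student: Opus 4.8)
The plan is to prove the contrapositive: if some $\gamma\colon E(H)\to[0,\infty)$ has *no* balanced separator of weight $r$, then $\aw(H)>r$, i.e.\ the robber wins $\RA(H,r)$. The key observation is that $\gamma$ itself serves as a fixed "escape budget" that the robber uses to choose which component to flee into: at every position the robber will maintain the invariant that she sits in a connected component $R$ of $H\setminus B(\gamma')$ (where $\gamma'$ is the general's current placement) with $\weight(\gamma|R)>\weight(\gamma)/2$. If she can always maintain this, then in particular she is never in $B(\gamma')$ — because the component of $H\setminus B(\gamma')$ containing a vertex of $B(\gamma')$ would be handled by the vacuous case — wait, more carefully: if $v'\in B(\gamma')$ the "component" is degenerate, so we must phrase the invariant so that it forces $v'\notin B(\gamma')$, which it does as long as $\weight(\gamma)/2 \ge 0$ is a strict lower bound that a singleton-in-$B(\gamma')$ "component" fails to meet; one sets this up by noting that the robber only ever moves to components of positive $\gamma$-weight strictly exceeding half, whereas the region $B(\gamma')$ is excised.

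Here are the steps. First, fix such a bad $\gamma$ and normalize so $\weight(\gamma)>0$ (if $\weight(\gamma)=0$ the statement is trivial since the zero map is a balanced separator of any weight $\ge 0$; handle that as a base case). Second, the robber's opening move: she picks any vertex $v_0$ lying in an edge $e$ with $\gamma(e)>0$; the whole graph $H\setminus B(0)=H$ has a connected component containing $v_0$ of full weight $\weight(\gamma)>\weight(\gamma)/2$, establishing the invariant at the start. Third, the inductive step: suppose the robber is at $v$ in a component $R$ of $H\setminus B(\gamma_{\mathrm{old}})$ with $\weight(\gamma|R)>\weight(\gamma)/2$, and the general moves to $\gamma_{\mathrm{new}}$ with $\weight(\gamma_{\mathrm{new}})\le r$. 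Since $\gamma_{\mathrm{new}}$ is not a balanced separator for $\gamma$, there is a connected component $R'$ of $H\setminus B(\gamma_{\mathrm{new}})$ with $\weight(\gamma|R')>\weight(\gamma)/2$. The crucial claim is that the robber can legally run from $v\in R$ to some vertex of $R'$: I would argue that $R$ and $R'$ must share a vertex, or more precisely that there is a path from $v$ to $R'$ inside $H\setminus(B(\gamma_{\mathrm{old}})\cap B(\gamma_{\mathrm{new}}))$.

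The main obstacle — and the one place that needs a genuine argument rather than bookkeeping — is exactly this reachability claim: why can the robber get from her old high-weight component to a new high-weight component without passing through $B(\gamma_{\mathrm{old}})\cap B(\gamma_{\mathrm{new}})$? The idea is a weight-counting / pigeonhole argument. Consider the components of $H\setminus\big(B(\gamma_{\mathrm{old}})\cap B(\gamma_{\mathrm{new}})\big)$; the robber may move freely within the one, call it $Q$, containing $v$. It suffices to show $Q$ meets some component $R'$ of $H\setminus B(\gamma_{\mathrm{new}})$ with $\weight(\gamma|R')>\weight(\gamma)/2$. Note $R\subseteq Q$ since $R$ is connected in $H\setminus B(\gamma_{\mathrm{old}})$ and hence in the larger graph $H\setminus(B(\gamma_{\mathrm{old}})\cap B(\gamma_{\mathrm{new}}))$; so $\weight(\gamma|Q)\ge\weight(\gamma|R)>\weight(\gamma)/2$. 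Now $Q\setminus B(\gamma_{\mathrm{new}})$ breaks into components each of which is a subset of a component of $H\setminus B(\gamma_{\mathrm{new}})$; the edges of $\gamma$-weight counted in $\weight(\gamma|Q)$ that are killed are those meeting $B(\gamma_{\mathrm{new}})$, but one has to be careful that removing $B(\gamma_{\mathrm{new}})\setminus B(\gamma_{\mathrm{old}})$ from $Q$ doesn't disconnect $Q$ into only low-weight pieces. This is where I'd invoke that $Q$'s boundary inside the removed set is only $B(\gamma_{\mathrm{old}})\cap B(\gamma_{\mathrm{new}})$: the vertices of $B(\gamma_{\mathrm{new}})$ lying in $Q$ all lie in $B(\gamma_{\mathrm{old}})$ too (otherwise they'd have been removed), so in fact $Q\cap B(\gamma_{\mathrm{new}})\subseteq B(\gamma_{\mathrm{old}})\cap B(\gamma_{\mathrm{new}})$, meaning $Q\setminus B(\gamma_{\mathrm{new}})=Q\setminus(B(\gamma_{\mathrm{old}})\cap B(\gamma_{\mathrm{new}}))=Q$ minus nothing new — so $Q\setminus B(\gamma_{\mathrm{new}})=Q$ up to already-excised vertices, hence $Q$ is entirely contained in one component of $H\setminus B(\gamma_{\mathrm{new}})$, which therefore has $\gamma$-weight $\ge\weight(\gamma|Q)>\weight(\gamma)/2$. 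The robber runs into that component and restores the invariant.

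Finally, since the invariant in particular guarantees the robber's vertex is never in $B(\gamma_{\mathrm{new}})$ (a vertex of $B(\gamma_{\mathrm{new}})$ is excised, so it cannot lie in a component of positive weight of $H\setminus B(\gamma_{\mathrm{new}})$), the play continues forever and the robber wins $\RA(H,r)$; contrapositively, $\aw(H)\le r$ forces every $\gamma$ to have a balanced separator of weight $r$. I would end by remarking that the weight-$r$ separator obtained is literally one of the general's winning placements $\gamma_{\mathrm{new}}$, so it indeed has $\weight\le r$, and can be padded up to weight exactly $r$ if desired (adding arbitrary positive values only shrinks components, preserving the balanced-separator property).
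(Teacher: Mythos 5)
Your overall strategy is the same as the paper's (argue by contradiction/contrapositive; the robber maintains the invariant of sitting in the unique component of weight exceeding $\weight(\gamma)/2$), but the one step you yourself identify as ``the main obstacle'' --- why the robber can legally run from the old heavy component $R$ to the new heavy component $R'$ --- is argued via a claim that is false. You assert that every vertex of $B(\gamma_{\mathrm{new}})$ lying in $Q$ must also lie in $B(\gamma_{\mathrm{old}})$ ``otherwise it would have been removed.'' This has the removal rule backwards: a vertex is excised from $H$ only if it lies in the \emph{intersection} $B(\gamma_{\mathrm{old}})\cap B(\gamma_{\mathrm{new}})$, so the vertices of $B(\gamma_{\mathrm{new}})$ that survive into $Q$ are exactly those \emph{not} in $B(\gamma_{\mathrm{old}})$. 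Consequently $Q$ typically does contain vertices of $B(\gamma_{\mathrm{new}})$, removing them can shatter $Q$ into several components of $H\setminus B(\gamma_{\mathrm{new}})$, and your conclusion that $Q$ lies inside a single component of $H\setminus B(\gamma_{\mathrm{new}})$ (and hence automatically meets the heavy one) does not follow. The invariant-restoration step is therefore unproved as written.

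The missing idea is the weight-counting argument you gesture at but do not carry out. Since distinct connected components of $H\setminus B(\sigma)$ cannot both meet the same hyperedge, each hyperedge contributes to $\weight(\gamma|R)$ for at most one component $R$; so if \emph{no} hyperedge $e$ satisfied both $e\cap R\neq\emptyset$ and $e\cap R'\neq\emptyset$, we would get $\weight(\gamma|R)+\weight(\gamma|R')\le\weight(\gamma)$, contradicting the fact that both summands exceed $\weight(\gamma)/2$. Hence some hyperedge $e$ meets both $R$ and $R'$; picking $u\in e\cap R$ (so $u\notin B(\gamma_{\mathrm{old}})$) and $w\in e\cap R'$ (so $w\notin B(\gamma_{\mathrm{new}})$), the robber's path $v\to\cdots\to u\to w\to\cdots\to v'$ stays inside $H\setminus\bigl(B(\gamma_{\mathrm{old}})\cap B(\gamma_{\mathrm{new}})\bigr)$, which is exactly the legality condition of $\RA(H,r)$. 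Your framing of the invariant, the base case, and the conclusion that the robber is never caught are all fine; only this bridging step needs to be replaced by the counting argument.
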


\begin{proof}
Suppose for contradiction that $\gamma:E(H)\to[0,\infty)$ has no balanced separator
of weight $r$. We claim that the robber has a winning strategy for the game
$\RA(H,r)$. The robber simply maintains the invariant that in every position
$(\sigma,v)$ of the game, $v$ is contained in the connected component $R$ of
$H\setminus B(\sigma)$ with $\weight(\gamma|R)>\weight(\gamma)/2$.  

To see that this is possible, let $(\sigma,v)$ be such a position. Suppose
that the general moves from $\sigma$ to $\sigma'$, and let $R'$ be the
connected component of $H\setminus B(\sigma')$ with
$\weight(\gamma|R')>\weight(\gamma)/2$. Then there must be some $e\in E(H)$
such that $e\cap R\neq\emptyset$ and $e\cap R'\neq\emptyset$, because
otherwise we had
\[
\weight(\gamma)=\weight(\gamma)/2+\weight(\gamma)/2<\weight(\gamma|R)+\weight(\gamma|R')\le\weight(\gamma),
\]
which is impossible. Thus the robber can move from $R$ to $R'$ via the edge
$e$.
\end{proof}

Let $H$ be a hypergraph and $H'$ an induced subhypergraph of $H$. Then the
\emph{restriction} of a mapping $\gamma:E(H)\to[0,\infty)$ to $H'$ is the
mapping $\gamma':E(H')\to [0,\infty)$ defined by
\[
\gamma'(e')=\sum_{\substack{e\in E(H)\\e\cap V(H')=e'}}\gamma(e).
\]
Note that $\weight(\gamma')\le\weight(\gamma)$ and $B(\gamma')=B(\gamma)\cap V(H')$. The inequality may be strict
because edges with nonempty weight may have an empty intersection with $V(H')$.
Conversely, the \emph{canonical extension} of a mapping $\gamma':E(H')\to[0,\infty)$ to $H$ is the
mapping $\gamma:E(H)\to [0,\infty)$ defined by
\[
\gamma(e)=\frac{\gamma'(e\cap V(H'))}{|\{e_1\in E(H)\mid e_1\cap V(H')=e\cap V(H)\}|}
\]
if $e\cap V(H')\neq\emptyset$ and $\gamma(e)=0$ otherwise.
Intuitively, for every $e'\in E(H')$, we distribute the weight of $e'$
equally among all the edges $e\in E(H)$ whose intersection with
$V(H')$ is exactly $e'$.  Note that $\weight(\gamma)=\weight(\gamma')$
and $B(\gamma')=B(\gamma)\cap V(H')$.

\begin{proof}[(of Theorem~\ref{theo:game})] Let $H$ be a
hypergraph.  To prove that $\aw(H)\le\fhw(H)$, let $(T,(B_t)_{t\in
  V(T)},(\gamma_t)_{t\in V(T)})$ be a fractional hypertree decomposition of
$H$ having width $\fhw(H)$. We claim that the general has a winning strategy for $\RA(H,r)$. Let
$(0,v_0)$ be the initial position. The general plays in such a way that all
subsequent positions are of the form $(\gamma_t,v)$ such that $v\in B_u$ for
some $u\in V(T_t)$. Intuitively, this means that the robber is trapped in the
subtree below $t$. Furthermore, in each move the general reduces the height of
$t$. He starts by selecting $\gamma_{t_0}$ for the root ${t_0}$ of $T$.
Suppose the game is in a position $(\gamma_t,v)$ such that $v\in B_u$ for some
$u\in V(T_t)$. If $u=t$, then the robber has lost the game. So let us assume
that $u\neq t$. Then there is a child $t'$ of $t$ such that $u\in V(T_{t'})$.
The general moves to $\gamma_{t'}$. Suppose the robber escapes to a $v'$ that
is not contained in $B_{u'}$ for any $u'\in T_{t'}$. Then there is a path from
$v$ to $v'$ in $H\setminus(B(\gamma_t)\cap B(\gamma_{t'}))$ and hence in
$H\setminus(B_t\cap B_{t'})$. However, it follows easily from the fact that
$(T,(B_t)_{t\in T})$ is a tree decomposition of $H$ that every path from a bag
in $T_{t'}$ to a bag in $T\setminus T_{t'}$ must intersect $B_t\cap B_{t'}$.  This
proves that $\aw(H)\le\fhw(H)$.

\medskip
For the second inequality, we shall prove the following stronger claim:

\medskip\noindent \textit{Claim: }Let $H$ be a hypergraph with
$\aw(H)\le r$ for some nonnegative real $r$. Furthermore, let
$\gamma:E(H)\to[0,\infty)$ such that $\weight(\gamma)\le 2r+2$. Then
there exists a fractional hypertree decomposition of $H$ of width at
most $3r+2$ such that $B(\gamma)$ is contained in the bag of the root
of this decomposition.

\medskip
Note that for $\gamma=0$, the claim yields the desired fractional
hypertree decomposition of $H$.

\medskip\noindent \textit{Proof of the claim: }The proof is by induction on
the cardinality of $V(H)\setminus B(\gamma)$.

By Lemma~\ref{lem:sep}, there is a
balanced separator of weight at most $r$ for $\gamma$ in $H$.
Let $\sigma$ be such a separator, and define $\chi:E(H)\to[0,\infty)$ by
$\chi(e)=\gamma(e)+\sigma(e)$. Then $\weight(\chi)\le 3r+2$, and
$B(\gamma)\cup B(\sigma)\subseteq B(\chi)$.

If $V(H)=B(\chi)$ (this is the induction basis), then the 1-node
decomposition with bag $V(H)$ and guard $\chi$ is a fractional hypertree
decomposition of $H$ of width at most $3r+2$.

Otherwise, let $R_1,\ldots,R_m$ be the connected components of
$H\setminus B(\chi)$. Note that we cannot exclude the case $m=1$ and
$R_1=V(H)\setminus B(\chi)$.

For $1\le i\le m$, let $e_i$ be an edge of $H$ such that $e_i\cap
R_i\neq\emptyset$, and let $S_i$ be the unique connected component of $H\setminus
B(\sigma)$ with $R_i\subseteq S_i$. Note that $\weight(\gamma|S_i)\le r+1$, because $\sigma$ is a balanced separator for
$\gamma$.
Let $\chi_i:E(H)\to[0,\infty)$ be defined by
\[
\chi_i(e)=
\begin{cases}
  1&\text{if }e=e_i,\\
  \sigma(e)+\gamma(e)&\text{if }e\neq e_i\text{ and }S_i\cap e\neq\emptyset,\\
  \sigma(e)&\text{otherwise}.
\end{cases}
\]
Then
\[
\weight(\chi_i)\le 1+\weight(\sigma)+\weight(\gamma|S_i)\le 2r+2
\]
and $B(\chi_i)\setminus R_i\subseteq B(\chi)$ (as $e_i$ cannot
intersect any $R_j$ with $i\neq j$).  Let $H_i=H[R_i\cup B(\chi_i)]$
and observe that
\[
V(H_i)\setminus
B(\chi_i)\subseteq R_i\setminus e_i\subset R_i\subseteq V(H)\setminus 
B(\gamma)
\]
(the first inclusion holds because $\chi_i(e_i)=1$).
Thus the induction hypothesis is applicable to $H_i$ and the restriction of
$\chi_i$ to $H_i$.
It yields a fractional hypertree decomposition $(T^i,(B^i_t)_{t\in
  V(T^i)},(\gamma^i_t)_{t\in V(T^i)})$ of $H_i$ of weight at most $3r+2$ such
that $B(\chi_i)$ is contained in the bag $B^i_{t_0^i}$ of the
root
$t_0^i$ of $T^i$.

Let $T$ be the disjoint union of $T^1,\ldots,T^m$ together with a new root
$t_0$ that has edges to the roots $t_0^i$ of the $T^i$. Let $B_{t_0}= B(\chi)$ and
$B_t=B_t^i$ for all $t\in V(T^i)$. Moreover, let $\gamma_{t_0}= \chi$, and let
$\gamma_t$ be the canonical extension of $\gamma_t^i$ to $H$ for all $t\in
V(T^i)$.

It remains to prove that $(T,(B_t)_{t\in V(T)},(\gamma_t)_{t\in V(T)})$ is a
fractional hypertree decomposition of $H$ of width at most $3r+2$.
Let us first verify that $(T,(B_t)_{t\in V(T)})$ is a tree decomposition.
\begin{itemize}
\item Let $v\in V(H)$. To see that $\{v\in V(T)\mid v\in B_t\}$ is
  connected in $T$, observe that $\{t\in V(T^i)\mid v\in B_{t_i}\}$ is
  connected (maybe empty) for all $i$.  If $v\in R_i$ for some $i$,
  then $v\not\in V(H_j)=R_j\cup B(\chi_j)$ for any $i\neq j$ (as $R_i$
  and $R_j$ are disjoint and we have seen that $B(\chi_j)\setminus
  R_j\subseteq B(\chi)$) and this this already shows that $\{t\in
  V(T)\mid v\in B_t\}$ is connected. Otherwise, $v\in
  B(\chi_i)\setminus R_i\subseteq B(\chi)=B_{t_0}$ for all $i$ such
  that $v\in V(H_i)$. Again this shows that $\{v\in V(T)\mid v\in
  B_t\}$ is connected.
\item Let $e\in E(H)$. Either $e\subseteq B(\chi)=B_{t_0}$, or there
  is exactly one $i$ such that $e\subseteq R_i\cup B(\chi_i)$. In the
  latter case, $e\subseteq B_t$ for some $t\in V(T^i)$.
\end{itemize}
It remains to prove that $B_t\subseteq B(\gamma_t)$ for all $t\in T$. For the
root, we have $B_{t_0}= B(\gamma_{t_0})$. For $t\in V(T^i)$, we have $B_t\subseteq
B(\gamma_t^i)= B(\gamma_t)\cap V(H_i)\subseteq B(\gamma_t)$. Finally, note
that $\weight(\gamma_t)\le 3r+2$ for all $t\in V(T)$. This completes the proof
of the claim.  
\end{proof}

\begin{remark}
  With respect to the difference between hypertree decompositions and
  generalized hypertree decompositions, it is worth observing that the
  fractional tree decomposition $(T,(B_t)_{t\in V(T)},(\gamma_t)_{t\in V(T)})$
  of width at most $3r+2$ constructed in the proof of the theorem satisfies
  the following special condition: $B(\gamma_t) \cap\bigcup_{u\in
    V(T_t)}B_{u}\subseteq B_t$ for all $t\in V(T)$.
  This implies that a hypergraph of fractional hypertree width at most $r$ has
  a fractional hypertree decomposition of width at most $3r+2$ that satisfies
  the special condition. 
\end{remark}

\subsection{Finding decompositions}
For the algorithmic applications, it is essential to have algorithms
that find fractional hypertree decompositions of small width. The
question is whether for any fixed $r>1$ there is a polynomial-time
algorithm that, given a hypergraph $H$ with $\fhw(H)\le r$, computes a
fractional hypertree decomposition of $H$ of width at most $r$ or maybe of
width at most $f(r)$ for some function $f$.  Similarly to hypertree width, one
way of obtaining such an algorithm would be through the army and
robber game characterization. The idea would be to inductively compute
the set of all positions of the game from which the general wins in
$0,1,\ldots$ rounds. The problem is that, as opposed to the robber and
marshals game, there is no polynomial bound on the number of
positions.

Using a different approach (approximately solving the problem of
finding balanced separators with small fractional edge cover number),
\citeN{marx-fractional-talg} gave an algorithm that approximates fractional hypertree width in the following sense:
\begin{theorem}[\cite{marx-fractional-talg}]\label{th:approx}
For every $r\ge 1$, there is an $n^{O(r^3)}$ time algorithm that, given a hypergraph with fractional hypertree width at most $r$, finds a fractional hypertree decomposition of width $O(r^3)$.
\end{theorem}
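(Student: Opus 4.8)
The plan is to turn the divide-and-conquer construction behind Theorem~\ref{theo:game} into an algorithm, replacing its one non-constructive ingredient---the choice of a minimum-weight balanced separator---by an approximate subroutine. If $\fhw(H)\le r$ then $\aw(H)\le r$, so by Lemma~\ref{lem:sep} \emph{every} mapping $\gamma:E(H)\to[0,\infty)$ has a balanced separator of weight at most $r$, and the same holds for every induced subhypergraph $H'$ of $H$, since $\fhw(H')\le\fhw(H)\le r$ and hence $\aw(H')\le r$. The key subproblem is thus: given $H$ and $\gamma$, find a set $S\subseteq V(H)$ with $\rho^*(H[S])$ as small as possible such that every connected component $R$ of $H\setminus S$ satisfies $\weight(\gamma|R)\le\weight(\gamma)/2$. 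The first step is to establish, as a separate result, an $n^{O(w^3)}$-time algorithm that is approximate in the following sense: whenever some $S$ with $\rho^*(H[S])\le w$ works, it returns an $S$ with $\rho^*(H[S])=O(w^3)$; this $S$ is then turned into an actual fractional guard $\sigma$ of the same weight with $B(\sigma)\supseteq S$ via the canonical extension of an optimal fractional cover of $H[S]$.

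Equipped with this subroutine, whose guarantee we abbreviate as $c=c(r)=O(r^3)$, I would run the recursion from the proof of Theorem~\ref{theo:game} with $r$ replaced throughout by $c$. Maintain the invariant that the current mapping $\gamma$ satisfies $\weight(\gamma)\le 2c+2$ (true at the start, where $\gamma=0$). Apply the subroutine---its optimum is at most $r$ by Lemma~\ref{lem:sep} applied to the relevant induced subhypergraph, no matter how large $\weight(\gamma)$ has grown---to obtain a balanced separator $\sigma$ with $\weight(\sigma)\le c$; let the root bag be $B(\chi)$ for $\chi=\gamma+\sigma$, so that $\weight(\chi)\le 3c+2$; and for each connected component $R_i$ of $H\setminus B(\chi)$ recurse on $H_i=H[R_i\cup B(\chi_i)]$ with the mapping $\chi_i$ defined exactly as in that proof. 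Since $\sigma$ is genuinely balanced, $\weight(\chi_i)\le 1+\weight(\sigma)+\weight(\gamma|S_i)\le 1+c+\weight(\gamma)/2\le 1+c+(c+1)=2c+2$, so the invariant is preserved, and $|V(H)\setminus B(\gamma)|$ strictly decreases along every branch, so the recursion terminates. The tree-decomposition conditions carry over verbatim from the proof of Theorem~\ref{theo:game}, so the output is a fractional hypertree decomposition in which every bag $B_t\subseteq B(\gamma_t)$ satisfies $\rho^*(H[B_t])\le\weight(\gamma_t)\le 3c+2=O(r^3)$, as required.

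The heart of the matter---and the step I expect to be the main obstacle---is the approximate balanced-separator subroutine, since computing a minimum fractional-cover balanced separator exactly is NP-hard. I would begin from the natural linear-programming relaxation: fractional edge weights meant to form a cover of $H[S]$, together with constraints forcing each connected component of $H\setminus S$ to carry at most half of the $\gamma$-weight. A single rounding step does not suffice, because the hypergraph structure lets an optimal fractional cover diffuse over many low-weight edges, creating a large integrality gap. The remedy is an iterative ``peeling'' argument: repeatedly identify a hyperedge carrying a substantial fraction of the fractional optimum, commit it to the separator, and recurse on the residual instance; a charging argument then shows that the total fractional cover accumulated exceeds the optimum $w$ by at most a cubic factor. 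This cubic loss is exactly what propagates into the $O(r^3)$ width bound above; carrying it out in detail is the content of \cite{marx-fractional-talg}.

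For the running time, observe that the outer recursion involves only polynomially many calls---the active vertex sets $V(H_i)\setminus B(\chi_i)$ of the calls at any fixed recursion depth are pairwise disjoint and each call strictly shrinks this set---so the cost is dominated by the separator subroutine. That subroutine, at several points, searches over families of hyperedges whose size is bounded by a fixed polynomial in $r$ (the edges carrying a non-negligible share of the optimal fractional cover), that is, over $|E(H)|^{O(r^3)}=n^{O(r^3)}$ candidates, each of which can be tested in polynomial time by computing the connected components of $H\setminus S$ and solving a linear program certifying $\rho^*(H[S])=O(r^3)$. Multiplying by the polynomial number of recursive calls keeps the overall running time at $n^{O(r^3)}$.
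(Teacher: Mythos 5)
There is a genuine gap, and it sits exactly where you say you expect it to. First, note that the paper does not prove this theorem at all: it is imported from \cite{marx-fractional-talg}, and the surrounding text only sketches that the cited proof (i) approximates fractional $(X,Y)$-separators, (ii) uses these to find balanced separators, and (iii) builds the decomposition by a recursion like the second half of Theorem~\ref{theo:game}. Your outer recursion reproduces step (iii) correctly and carefully (the invariant $\weight(\gamma)\le 2c+2$, the monotonicity $\fhw(H_i)\le\fhw(H)\le r$ guaranteeing via Lemma~\ref{lem:sep} that the subroutine's optimum stays at most $r$, and the termination and running-time accounting are all fine). But the entire content of the theorem is the separator subroutine, and your treatment of it is not a proof: you state its guarantee as a black box and then explicitly defer "carrying it out in detail" to the very paper being cited.

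Moreover, the sketch you do give of the subroutine has two concrete problems. The "natural linear-programming relaxation" with "constraints forcing each connected component of $H\setminus S$ to carry at most half of the $\gamma$-weight" is not a linear program: which components exist depends on the unknown $S$, so the constraint set cannot be written down in advance. (This is why \cite{marx-fractional-talg} instead attacks the two-terminal $(X,Y)$-separator problem, which does admit a clean fractional relaxation, and derives balanced separators from it.) Second, the "peeling" step --- repeatedly committing a hyperedge that carries "a substantial fraction of the fractional optimum" --- can fail outright: an optimal fractional cover of weight $w$ may distribute weight $w/|E(H)|$ over every hyperedge, so no single hyperedge carries a non-negligible share and the charging argument has nothing to charge to. The actual rounding in \cite{marx-fractional-talg} is of a different nature (grouping vertices by their fractional coverage and covering the groups separately), which is where the cubic loss really comes from. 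So while your reduction of the theorem to the separator problem is the right one and matches the cited proof's architecture, the step that makes the theorem true is missing.
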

The main technical challenge in the proof of Theorem~\ref{th:approx}
is finding a separator with bounded fractional edge cover number
that separates two sets $X$, $Y$ of vertices. An approximation
algorithm is given in \cite{marx-fractional-talg} for this problem, which
finds a separator of weight $O(r^3)$ if a separator of weight
$r$ exists. This algorithm is used to find balanced separators, which
in turn is used to construct a tree decomposition (by an argument
similar to the proof of the second part of Theorem~\ref{theo:game}).

It is shown in \cite{marx-fractional-talg,DBLP:conf/stacs/FominGT09} that deciding whether $H$
has fractional hypertree width $r$ is NP-hard if $r$ is part of the
input. However, the more relevant question of whether for every fixed
$r$, a fractional hypertree decomposition of width $r$ can be found in
polynomial-time (i.e., if the width bound $O(r^3)$ in Theorem~\ref{th:approx} can be
improved to $r$) is still open. Given that it is NP-hard to decide
whether a hypergraph has generalized hypertree width at most 3
\cite{DBLP:journals/jacm/GottlobMS09}, it is natural to expect that a
similar hardness result holds for fractional hypertree width as well.

\subsection{Algorithmic applications}\label{sec:algor-appl}
In this section, we discuss how problems can be solved by fractional
hypertree decompositions of bounded width.  First we give a basic
result, which formulates why tree decompositions and width measures
are useful in the algorithmic context: CSP can be efficiently solved
if we can polynomially bound the number of solutions in the bags.
Recall that $H_I$ denotes the hypergraph of a CSP instance $I$. If
$(T,(B_t)_{t \in V(T)})$ is a tree decomposition of $H_I$, then
$I[B_t]$ denotes the instance induced by bag $B_t$, see
Definition~\ref{def:induced-instance}.
\begin{lemma}\label{lem:alg-bags}
  There is an algorithm that, given a CSP instance, a hypertree
  decomposition $(T,(B_t)_{t \in
  V(T)})$ of $H_I$, and for every $t\in V(t)$ a list $L_t$ of
  all solutions of $I[B_t]$, decides in time $C\cdot \| I \| ^{O(1)}$ if $I$ is satisfiable (and
  computes a solution if it is), where $C:=\max_{t\in V(T)}|L_t|$.
\end{lemma}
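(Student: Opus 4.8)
The plan is to use a standard bottom-up dynamic programming over the tree decomposition, maintaining at each node $t$ a list of ``valid'' partial solutions on $B_t$ that can be extended consistently to the whole subtree $T_t$. The key structural fact I would exploit is the one already used in the proof of Theorem~\ref{theo:game}: because $(T,(B_t)_{t\in V(T)})$ is a tree decomposition, for every edge $(t,t')$ of $T$, any vertex appearing both in a bag of $T_{t'}$ and in a bag outside $T_{t'}$ must lie in $B_t\cap B_{t'}$; in particular the interaction between the subtree $T_{t'}$ and the rest of the instance is mediated entirely by the shared variables $B_t\cap B_{t'}$.

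First I would root $T$ and, for each node $t$, define $W_t := \bigcup_{u\in V(T_t)} B_u$, the set of variables appearing in the subtree rooted at $t$. I would then compute, for each $t$ in postorder (leaves first), a sublist $L_t'\subseteq L_t$ of those assignments $\alpha\colon B_t\to D$ in $L_t$ that extend to an assignment on $W_t$ satisfying every constraint of $I$ whose variables lie within $W_t$ (equivalently, satisfying $I[W_t]$ restricted to constraints contained in $W_t$). The recursion is: $\alpha\in L_t'$ iff $\alpha\in L_t$ and for every child $t'$ of $t$ there is some $\beta\in L_{t'}'$ with $\beta|_{B_t\cap B_{t'}} = \alpha|_{B_t\cap B_{t'}}$. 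Correctness of the recursion follows from the connectivity condition of tree decompositions: the only variables shared between the part of the instance in $T_{t'}$ and the rest are those in $B_t\cap B_{t'}$, so partial solutions on distinct subtrees can be glued iff they agree on the respective separators, and each constraint of $I$ lies inside some bag and hence inside some $W_t$, so it is checked exactly once. At the root $t_0$ we have $W_{t_0}=V$ (up to isolated-vertex conventions, which are excluded), so $I$ is satisfiable iff $L_{t_0}'\neq\emptyset$; to produce an actual solution, store for each $\alpha\in L_t'$ a witnessing choice $\beta_{t'}$ for each child and reconstruct top-down.

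For the running time: each step of the recursion at node $t$ with child $t'$ amounts to, for each of the at most $|L_t|\le C$ candidates $\alpha$, checking whether its projection onto $B_t\cap B_{t'}$ appears among the projections of the at most $|L_{t'}|\le C$ members of $L_{t'}'$; this is a join of two lists of size at most $C$ and can be done in time $C\cdot\|I\|^{O(1)}$ (e.g.\ by sorting or hashing on the projected tuples, each tuple having length at most $\|I\|$). Summing over all at most $\|I\|^{O(1)}$ edges of $T$ (we may assume $|V(T)|$ is polynomial in $\|I\|$, as any tree decomposition can be pruned to one with at most $|V(H_I)|$ nodes), and adding the cost of membership-in-$L_t$ and reconstruction, gives total time $C\cdot\|I\|^{O(1)}$ as claimed.

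I expect the main obstacle to be purely bookkeeping rather than conceptual: one must be careful that (i) every constraint of $I$ is contained in at least one bag and is therefore accounted for exactly once in some $W_t$, so that $L_{t_0}'\neq\emptyset$ really certifies a global solution and not merely a locally consistent family; and (ii) the number of tree nodes and the sizes of the projected tuples are polynomially bounded, so that the overhead factor is genuinely $\|I\|^{O(1)}$. Both points are routine given the definitions of tree decomposition and of $\|I\|$ in Section~\ref{sec:prels}, but they are exactly where an incautious argument could go wrong.
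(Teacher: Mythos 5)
Your proposal is correct and follows essentially the same route as the paper: a bottom-up dynamic program that prunes each $L_t$ to the sublist $L'_t$ of bag-solutions extendable over the subtree, gluing children via agreement on $B_t\cap B_{t'}$ and testing compatibility by an indexed join (the paper uses a trie where you use sorting/hashing, and phrases extendability via the induced instance $I[V_t]$ rather than via constraints fully contained in $W_t$, but these are equivalent here). No substantive differences.
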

\begin{proof} Define $V_t:=\bigcup_{t\in V(T_t)}B_t$. For each $t\in V(T)$,
our algorithm constructs the list $L'_t\subseteq L_t$ of those
solutions of $I[B_t]$ that can be extended to a solution of
$I[V_t]$. Clearly, $I$ has a solution if and only if $L'_{t_0}$ is not
empty for the root $t_0$ of the tree decomposition.

The algorithm proceeds in a bottom-up manner: when constructing the
list $L'_t$, we assume that for every child $t'$ of $t$, the lists
$L'_{t'}$ are already available.  If $t$ is a leaf node, then
$V_t=B_t$, and $L'_t=L_t$.  Assume now that $t$ has children $t_1$,
$\dots$, $t_k$. We claim that a solution $\alpha$ of $I[B_t]$ can be
extended to $I[V_t]$ if and only if for each $1\le i \le k$, there is
a solution $\alpha_i$ of $I[V_{t_i}]$ that is compatible with $\alpha$
(that is, $\alpha$ and $\alpha_i$ assign the same values to the
variables in $B_t\cap V_{t_i}=B_t\cap B_{t_i}$). The necessity of this
condition is clear: the restriction of a solution of $I[V_t]$ to
$V_{t_i}$ is clearly a solution of $I[V_{t_i}]$. For sufficiency,
suppose that the solutions $\alpha_i$ exist for every child $t_i$. They can be combined to
an assignment $\alpha'$ on $V_t$ extending $\alpha$ in a well-defined
way: every variable $v\in V_{t_i}\cap V_{t_j}$ is in $B_t$, thus
$\alpha_i(v)=\alpha_j(v)=\alpha(v)$ follows for such a variable. Now $\alpha'$ is a
solution of $I[V_{t}]$: for each constraint of $I[V_t]$, the variables
of the constraint are contained either in $B_t$ or in $V_{t_i}$ for
some $1\le i \le k$, thus $\alpha$ or $\alpha_i$ satisfies the
constraint, implying that $\alpha'$ satisfies it as well.

Therefore, $L'_t$ can be determined by first enumerating every
solution $\alpha\in L_t$, and then for each $i$, checking whether
$L'_{t_i}$ contains an assignment $\alpha_i$ compatible with
$\alpha$. This check can be efficiently performed the following
way. Recall that solutions $\alpha$ and $\alpha_i$ are compatible if
their restriction to $B_t\cap B_{t_i}$ is the same
assignment. Therefore, after computing $L'_{t_i}$, we restrict every
$\alpha_i\in L'_{t_i}$ to $B_t\cap B_{t_i}$ and store these
restrictions in a trie data structure for easy membership tests. Then
to check if there is an $\alpha_i\in L'_{t_i}$ compatible with
$\alpha$, all we need to do is to check if the restriction of $\alpha$
to $B_t\cap B_{t_i}$ is in the trie corresponding to $L'_{t_i}$, which
can be checked in $\| I \|^{O(1)}$.  Thus $L'_t$ (and the
corresponding trie structure) can be computed in time $|L_t|\cdot \| I
\|^{O(1)}$.  As every other part of the algorithm can be done in time
$\| I \|^{O(1)}$, it follows that the total running time can be
bounded by $C\cdot \|I\|^{O(1)}$.  Using standard bookkeeping
techniques, it is not difficult to extend the algorithm such that it
actually returns a solution if one exists.
\end{proof}

Lemma~\ref{lem:alg-bags} tells us that if we have a tree decomposition
where we can give a polynomial bound on the number of solutions in the
bags for some reason (and we can enumerate all these solutions), then
the problem can be solved in polynomial time.  Observe that in a
fractional hypertree decomposition every bag has bounded fractional
edge cover number and hence Theorem~\ref{theo:enumerate-solution} can
be used to enumerate all the solutions. It follows that if a
fractional hypertree decomposition of bounded width is given in the
input, then the problem can be solved in polynomial time. Moreover, if
we know that the hypergraph has fractional hypertree width at most $r$
(but no decomposition is given in the input), then we can use brute
force to find a fractional hypertree decomposition of width at most
$r$ by trying every possible decomposition and then solve the problem
in polynomial time.  This way, the running time is polynomial in the
input size times an (exponential) function of the number of variables.
This immediately shows that if we restrict CSP to a class of hypergraphs whose
fractional hypertree width is at most a constant $r$, then the problem
is fixed-parameter tractable parameterized by the number of
variables. To get rid of the exponential factor depending on the
number of variables and obtain a polynomial-time algorithm, we can
replace the brute force search for the decomposition by the
approximation algorithm of
Theorem~\ref{th:approx}~\cite{marx-fractional-talg}.

\begin{theorem}\label{theo:alg-cspfhw}
  Let $r\ge 1$. Then there is a polynomial-time algorithm that, given a CSP
  instance $I$ of fractional hypertree width
  at most $r$, decides if $I$ is satisfiable (and computes a solution if it
  is).
\end{theorem}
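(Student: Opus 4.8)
The plan is to glue together three results already available in the paper: the approximation algorithm for fractional hypertree decompositions (Theorem~\ref{th:approx}), the solution-enumeration algorithm of Theorem~\ref{theo:enumerate-solution}, and the bottom-up dynamic program of Lemma~\ref{lem:alg-bags}. First I would run the algorithm of Theorem~\ref{th:approx} on the hypergraph $H_I$. Since $\fhw(H_I)\le r$, in time $n^{O(r^3)}=\|I\|^{O(1)}$ (recall that for fixed $r$ the exponent $O(r^3)$ is just a constant) it returns a fractional hypertree decomposition $(T,(B_t)_{t\in V(T)},(\gamma_t)_{t\in V(T)})$ of $H_I$ of width at most $c:=O(r^3)$; after deleting redundant nodes we may assume $|V(T)|=\|I\|^{O(1)}$.

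Next, for each $t\in V(T)$ I would enumerate all solutions of the induced instance $I[B_t]$ and store them in a list $L_t$. The key point is that $\rho^*(H_{I[B_t]})\le c$: the hypergraph of $I[B_t]$ is exactly the induced subhypergraph $H_I[B_t]$, and the restriction of the fractional guard $\gamma_t$ to $H_I[B_t]$ is a fractional edge cover of $H_I[B_t]$ of weight at most $\weight(\gamma_t)\le c$ — indeed $B_t\subseteq B(\gamma_t)$, so every vertex of $B_t$ receives total weight at least $1$ from the restricted cover. Hence by Theorem~\ref{theo:enumerate-solution} (applied to $I[B_t]$, whose size is at most $\|I\|$) the list $L_t$ can be produced in time $\|I\|^{\,c+O(1)}=\|I\|^{O(1)}$ and has size $|L_t|\le\|I\|^{\,c}$.

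Finally, I would feed the tree decomposition $(T,(B_t)_{t\in V(T)})$ underlying this fractional hypertree decomposition, together with the lists $(L_t)_{t\in V(T)}$, to the algorithm of Lemma~\ref{lem:alg-bags} (whose correctness uses only the tree-decomposition properties of the bags, so the integral guards of that lemma are not actually needed). It decides whether $I$ is satisfiable, and computes a solution if one exists, in time $C\cdot\|I\|^{O(1)}$ where $C=\max_{t\in V(T)}|L_t|\le\|I\|^{\,c}$, hence in time $\|I\|^{O(1)}$. Summing the three phases gives the claimed polynomial-time bound. There is no real obstacle remaining here, since the substantive work lives in Theorems~\ref{th:approx} and~\ref{theo:enumerate-solution}; the only steps needing a moment's care are the observation that a fractional guard restricted to a bag is a fractional edge cover of the induced subhypergraph, and the bookkeeping that every running-time bound of the form $\|I\|^{O(r^3)}$ is polynomial for fixed $r$. (Without Theorem~\ref{th:approx} one could still search by brute force over all tree decompositions of $H_I$, but this costs $f(|V|)\cdot\|I\|^{O(1)}$ and only yields fixed-parameter tractability in the number of variables rather than a polynomial-time algorithm.)
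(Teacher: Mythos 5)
Your proposal is correct and follows exactly the paper's own proof: run the approximation algorithm of Theorem~\ref{th:approx}, enumerate the solutions of each $I[B_t]$ via Theorem~\ref{theo:enumerate-solution} using the fact that each bag has fractional edge cover number $O(r^3)$, and finish with the dynamic program of Lemma~\ref{lem:alg-bags}. The only difference is that you spell out the (correct) observation that the restriction of the guard $\gamma_t$ to $H_I[B_t]$ is a fractional edge cover of the induced subhypergraph, which the paper leaves implicit.
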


\begin{proof} Let $I$ be a CSP instance of fractional hypertree width at most
$r$, and let $(T, (B_t)_{t \in V(T)}, (\gamma_t)_{t \in V(T)})$ be the
fractional hypertree decomposition of $\mathcal H_I$ of width $O(r^3)$
computed by the algorithm of Theorem~\ref{th:approx}. By the
definition, the hypergraph of $I[B_t]$ has fractional edge cover
number $O(r^3)$ for every bag $B_t$. Thus by
Theorem~\ref{theo:enumerate-solution}, the list $L_t$ of the solutions
of $I[B_t]$ has size at most $\| I \|^{O(r^3)}$ and can be determined
in time $\| I\| ^{O(r^3)}$. Therefore, we can find a solution in time
$\|I \|^{O(r^3)}$ using the algorithm of Lemma~\ref{lem:alg-bags}.
\end{proof}

In the remainder of this section, we sketch further algorithmic
applications of fractional hypertree decompositions. As these results
follow from our main results with fairly standard techniques, we omit
a detailed and technical discussion.

It is has been observed by \citeN{fedvar98} that
constraint satisfaction problems can be described as homomorphism
problems for relational structures (see \cite{fedvar98} or
\cite{DBLP:journals/jacm/Grohe07} for definitions and
details). A \emph{homomorphism} from a structure $A$ to a
structure $B$ is a mapping from the domain of $A$ to the domain of $B$
that preserves membership in all relations. With each structure $A$ we
can associate a hypergraph $H_A$ whose vertices are the elements of
the domain of $A$ and whose hyperedges are all sets
$\{a_1,\ldots,a_k\}$ such that $(a_1,\ldots,a_k)$ is a tuple in some
relation of $A$. For every class $\mathcal H$ of hypergraphs, we let
$\HOM(\mathcal H)$ be the problem of deciding whether a structure $A$
with $H_A\in\mathcal H$ has a homomorphism to a structure $B$. As an
immediate corollary to Theorem~\ref{theo:alg-cspfhw}, we obtain:

\begin{corollary}\label{cor:hom}
  Let $\mathcal H$ be a class of hypergraphs of bounded fractional
  hypertree width. Then $\HOM(\mathcal H)$ is solvable in polynomial time.
\end{corollary}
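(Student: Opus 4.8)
The plan is to reduce $\HOM(\mathcal H)$ to $\CSP$ on (essentially) the same class and then invoke Theorem~\ref{theo:alg-cspfhw}. First I would recall the standard translation of \citeN{fedvar98}: given an instance of $\HOM(\mathcal H)$, that is, two relational structures $A$ (with $H_A\in\mathcal H$) and $B$, I build a CSP instance $I=(V,D,C)$ by taking $V$ to be the domain of $A$, $D$ the domain of $B$, and, for every tuple $(a_1,\dots,a_k)$ occurring in a relation $R^A$ of $A$, adding a constraint $\langle(a_1,\dots,a_k),R^B\rangle$, where $R^B$ is the relation of $B$ corresponding to $R^A$. A routine check shows that the solutions of $I$ are precisely the homomorphisms from $A$ to $B$, and that $\|I\|$ is polynomial in the size of $(A,B)$.

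The second step is the observation that the hypergraph $H_I$ of the CSP instance just constructed coincides with $H_A$: by definition the hyperedges of $H_I$ are the sets $\{a_1,\dots,a_k\}$ for which $(a_1,\dots,a_k)$ appears in some constraint of $C$, and, by construction of $I$, these are exactly the sets $\{a_1,\dots,a_k\}$ for which $(a_1,\dots,a_k)$ appears in some relation of $A$, i.e.\ the hyperedges of $H_A$. Hence $H_I=H_A\in\mathcal H$. Since $\mathcal H$ has bounded fractional hypertree width, there is a constant $r$ with $\fhw(H)\le r$ for every $H\in\mathcal H$, and in particular $\fhw(H_I)\le r$.

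Therefore Theorem~\ref{theo:alg-cspfhw} applies to $I$ and decides, in time polynomial in $\|I\|$ and hence polynomial in the size of the original $\HOM$ instance, whether $I$ is satisfiable, which is the same as deciding whether $A$ maps homomorphically to $B$. This establishes the corollary. (If one wants a solution rather than a yes/no answer, Theorem~\ref{theo:alg-cspfhw} already produces one, and it translates back to an explicit homomorphism $A\to B$.)

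As for the main obstacle: there really is none of substance — the statement is, as the surrounding text says, an immediate corollary, and all the genuine work (in particular the fact that no fractional hypertree decomposition needs to be supplied with the input, which rests on the approximation algorithm of Theorem~\ref{th:approx}) has already been carried out inside Theorem~\ref{theo:alg-cspfhw}. The only points requiring minor care are verifying that the Feder--Vardi translation preserves the associated hypergraph exactly and blows up the instance size only polynomially, and recalling that for classes $\mathcal H$ that are not polynomial-time decidable the statement is to be read as a promise problem — we are given the guarantee $H_A\in\mathcal H$ — exactly as in the definition of $\CSP(\mathcal H)$.
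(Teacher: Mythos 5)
Your proposal is correct and follows exactly the route the paper intends: the Feder--Vardi translation of the homomorphism instance $(A,B)$ into a CSP instance whose hypergraph is $H_A$, followed by an application of Theorem~\ref{theo:alg-cspfhw}; the paper simply states this as an immediate corollary without spelling out the (routine) translation. Your additional remarks on the size blow-up and the promise-problem reading are accurate but not points where the paper does anything different.
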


An \emph{embedding} is a homomorphism that is one-to-one. Note that there is
an embedding from a structure $A$ to a structure $B$ if and only if $B$ has a
substructure isomorphic to $A$. Analogously to $\HOM(\mathcal H)$, we
define the problem $\EMB(\mathcal H)$ of deciding
whether a structure $A$ with $H_A\in\mathcal H$ has an embedding into
a structure $B$. Observe that $\EMB(\mathcal H)$ is NP-complete even
if $\mathcal H$ is the class of paths, which has fractional hypertree
width $1$, because the \textsc{Hamiltonian Path} problem is a special
case. However, we obtain a fixed-parameter tractability result for $\EMB(\mathcal H)$ parameterized by the size
$||A||$ of the input structure $A$:

\begin{theorem}
  Let $\mathcal H$ be a class of hypergraphs of bounded fractional hypertree
  width. Then $\EMB(\mathcal H)$ parameterized by the size of the input
  structure $A$ is fixed-parameter tractable. More precisely, there is an
  algorithm that, given a structure $A$ with $H_A\in\mathcal H$ and a
  structure $B$, decides if there is an embedding of $A$ into $B$ in
  time $2^{||A||^{O(1)}}||B||^{O(1)}$.
\end{theorem}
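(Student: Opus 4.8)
The plan is to reduce the embedding problem to the homomorphism problem of Corollary~\ref{cor:hom} by a standard color-coding-style argument. The obstruction is that a homomorphism need not be injective, so I would enumerate, in a preprocessing step, all ways an embedding could collapse: but since we want injectivity, there is nothing to collapse. Instead, the real issue is recognizing which potential images are ``genuinely new'' elements versus elements of $B$ that have been used before. The clean way is to brute-force over partitions of the universe of $A$; but again, for embeddings the relevant bookkeeping is simpler: the difficulty is only that $\HOM$ allows distinct elements of $A$ to map to the same element of $B$, and we must forbid this.

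Concretely, first I would fix an embedding candidate by iterating over all injective maps would be too expensive; instead, the plan is the following. Let $A$ have universe $\{a_1,\dots,a_m\}$ with $m = |{\rm dom}(A)| \le \|A\|$. For each of the $\|A\|^{O(1)}$-sized... no: the point is that $|{\rm dom}(A)|$ is the parameter, so we may afford $2^{\text{poly}(\|A\|)}$ factors. I would guess a target structure $B'$ obtained from $B$ as follows: we want to force that the homomorphism is injective. A homomorphism $f : A \to B$ is an embedding iff $f(a_i) \neq f(a_j)$ for all $i \neq j$. So I would add, conceptually, a new unary ``identity-type'' marking: replace $B$ by the structure $B^*$ whose domain is ${\rm dom}(B) \times \{1,\dots,m\}$, with each tuple of $B$ lifted to all tuples whose ``second coordinates'' are pairwise distinct, i.e. $(b_1,c_1),\dots,(b_k,c_k)$ is in the lifted relation iff $(b_1,\dots,b_k)$ was in $B$ and $c_1,\dots,c_k$ are distinct. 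Then a homomorphism $A \to B^*$ composed with projection to ${\rm dom}(B)$ is a homomorphism $A \to B$, and conversely any homomorphism that additionally assigns pairwise distinct second coordinates to the elements of $A$ lifts; and a homomorphism to $B^*$ that assigns element $a_i \mapsto (b_i,c_i)$ with all $b_i$ distinct is exactly an embedding. The cleanest fix: guess, in $m! \le 2^{\|A\|\log\|A\|}$ ways, an assignment $\pi : {\rm dom}(A) \to \{1,\dots,m\}$ that is a bijection, add this as a new unary relation forcing $a_i \mapsto (b,\pi(a_i))$; then injectivity of $f$ is automatic because the second coordinates $\pi(a_i)$ are already distinct. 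Wait — that only forces the second coordinates distinct, which is enough: if $f(a_i)=f(a_j)$ then $(b,\pi(a_i))=(b',\pi(a_j))$ forces $\pi(a_i)=\pi(a_j)$, contradiction.

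The key steps, in order: (1) Build $B^*$ with domain ${\rm dom}(B)\times[m]$ and relations lifted as above; note $\|B^*\| \le \|B\|^{O(1)} \cdot 2^{O(m)}$ in the worst case for a $k$-ary relation, but $k \le \|A\|$ is bounded by a function of the parameter, so this is a factor $f(\|A\|)\cdot\|B\|^{O(1)}$. (2) Observe $H_{A}$ is unchanged, so $H_A \in \mathcal H$ still has bounded fractional hypertree width. (3) Apply Corollary~\ref{cor:hom}: $\HOM(\mathcal H)$ is polynomial-time solvable, so deciding whether $A \to B^*$ has a homomorphism takes time polynomial in $\|A\| + \|B^*\| = 2^{\|A\|^{O(1)}}\|B\|^{O(1)}$. (4) Argue correctness: a homomorphism $g : A \to B^*$ yields the embedding $f = (\text{proj}_1)\circ g : A \to B$ since distinct second coordinates are forced by the lifting (every tuple of $B^*$ has pairwise-distinct second coordinates, and since $H_A$ has no isolated vertices every $a_i$ lies in some tuple, so all $a_i$ get distinct... hmm, not quite — two elements in different tuples could repeat a second coordinate). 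To close that gap, also guess the bijection $\pi$ as above and add the unary relation; this costs only another $2^{O(\|A\|\log\|A\|)}$ factor. Then correctness is immediate.

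The main obstacle I anticipate is the bookkeeping in step (4): ensuring global injectivity of $f$ rather than just local (within-tuple) distinctness. The $\pi$-guessing trick resolves this cleanly at the cost of a factor absorbed into $2^{\|A\|^{O(1)}}$, which is exactly the bound claimed in the theorem statement. Everything else — the lifting construction, the invariance of $H_A$, the appeal to Corollary~\ref{cor:hom} — is routine.
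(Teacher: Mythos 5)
Your reduction to $\HOM(\mathcal H)$ is the right general strategy (the paper's own proof is exactly ``Corollary~\ref{cor:hom} plus color coding''), but the construction of $B^*$ has a genuine gap in step (4), and the $\pi$-guessing trick does not close it. The problem is that in $B^*$ the ``color'' is a free second coordinate of the product domain $\mathrm{dom}(B)\times[m]$, so the same element $b$ of $B$ occurs with all $m$ colors. Forcing $a_i\mapsto(\,\cdot\,,\pi(a_i))$ with $\pi$ a bijection makes the homomorphism $g:A\to B^*$ injective, but not its projection $f=\mathrm{proj}_1\circ g$: two elements $a_i\neq a_j$ can map to $(b,\pi(a_i))$ and $(b,\pi(a_j))$, which are distinct elements of $B^*$ projecting to the same $b\in\mathrm{dom}(B)$. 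Your sentence ``if $f(a_i)=f(a_j)$ then $(b,\pi(a_i))=(b',\pi(a_j))$'' conflates equality of first coordinates with equality of the pairs. Concretely: let $A$ be the path $a_1-a_2-a_3$ and $B$ a single edge $b_1-b_2$. There is no embedding, but with $\pi=(1,2,3)$ the map $a_1\mapsto(b_1,1)$, $a_2\mapsto(b_2,2)$, $a_3\mapsto(b_1,3)$ is a homomorphism into $B^*$ satisfying all your constraints, since each lifted tuple only checks distinctness of second coordinates \emph{within that tuple}. So the reduction accepts a no-instance.

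The standard repair---and what the paper means by invoking Alon, Yuster, and Zwick's color coding---is to make the color a \emph{function of the element of $B$}: iterate over a family of colorings $c:\mathrm{dom}(B)\to[m]$ (a perfect hash family of size $2^{O(m)}\log\|B\|$, computable deterministically, such that every $m$-element subset of $\mathrm{dom}(B)$ is rainbow under some $c$), and for each $c$ and each bijection $\pi:\mathrm{dom}(A)\to[m]$ keep only those tuples of $B$ whose entries carry the colors prescribed by $\pi$ at the corresponding positions. Then any surviving homomorphism $f$ satisfies $c(f(a_i))=\pi(a_i)$ for all $i$, which forces $f$ injective because elements of $B$ with different colors are genuinely different elements; completeness holds because the image of an embedding is an $m$-set and hence rainbow under some $c$ in the family. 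This leaves $H_A$ essentially unchanged (so bounded fractional hypertree width is preserved) and costs only a factor $2^{\|A\|^{O(1)}}\|B\|^{O(1)}$, as the theorem requires. The rest of your outline (invariance of the hypergraph, the appeal to Corollary~\ref{cor:hom}, the running-time accounting) is fine once this is fixed.
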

This follows from Corollary~\ref{cor:hom} with Alon, Yuster, and Zwick's
\cite{aloyuszwi95} color coding technique.

In some situations it is necessary to not only decide whether a
CSP-instance has a solution or to compute one solution, but to
enumerate all solutions. As the number of solutions may be exponential
in the instance size, we can rarely expect a polynomial-time algorithm
for this problem. Instead, we may ask for a \emph{polynomial-delay
  algorithm}, which is required to compute the first solution in
polynomial time and after returning a solution is required to return
the next solution (or determine that no other solution exists) in
polynomial time. Polynomial-delay algorithms for CSPs have been
studied more systematically in \cite{bdgm11+}.

\begin{theorem}
    Let $r\ge 1$. Then there is a polynomial-delay algorithm that, given a CSP
  instance $I$ of fractional hypertree width
  at most $r$, enumerates all solutions of $I$.
\end{theorem}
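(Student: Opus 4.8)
The plan is to combine the algorithm of Theorem~\ref{theo:alg-cspfhw} with the bottom-up dynamic-programming structure of Lemma~\ref{lem:alg-bags}, replacing the yes/no decision at the root by a recursive enumeration procedure that walks through the compatible partial solutions along the tree decomposition. First I would use the approximation algorithm of Theorem~\ref{th:approx} to compute, in time $\|I\|^{O(r^3)}$, a fractional hypertree decomposition $(T,(B_t)_{t\in V(T)},(\gamma_t)_{t\in V(T)})$ of $H_I$ of width $O(r^3)$, and then use Theorem~\ref{theo:enumerate-solution} to enumerate, for every bag $t$, the list $L_t$ of all solutions of $I[B_t]$; each such list has size at most $\|I\|^{O(r^3)}$ and is computed in time $\|I\|^{O(r^3)}$. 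Next I would run the bottom-up phase of Lemma~\ref{lem:alg-bags} to compute, for every node $t$, the sublist $L'_t\subseteq L_t$ of those solutions of $I[B_t]$ that extend to a solution of $I[V_t]$, where $V_t:=\bigcup_{u\in V(T_t)}B_u$, together with the trie data structures that allow fast compatibility lookups between a bag and each of its children. All of this is a one-time polynomial preprocessing step.

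The enumeration itself proceeds by a recursive traversal of $T$. I would define a procedure $\textsc{Enum}(t,\beta)$ that, given a node $t$ and a solution $\beta\in L'_t$ of $I[B_t]$, enumerates with polynomial delay all solutions of $I[V_t]$ that restrict to $\beta$ on $B_t$. At a leaf, $V_t=B_t$ and the procedure simply outputs $\beta$. At an internal node $t$ with children $t_1,\dots,t_k$, the solutions of $I[V_t]$ extending $\beta$ are exactly the ``products'' $\beta\cup\alpha_1\cup\cdots\cup\alpha_k$ where each $\alpha_i$ ranges over the solutions of $I[V_{t_i}]$ compatible with $\beta$ on $B_t\cap B_{t_i}$; by the argument in the proof of Lemma~\ref{lem:alg-bags}, this union is always a well-defined solution. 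So $\textsc{Enum}(t,\beta)$ iterates over the Cartesian product of the recursive enumerations $\textsc{Enum}(t_i,\beta_i)$ over all compatible $\beta_i\in L'_{t_i}$ (those $\beta_i$ are found via the trie in polynomial time). The overall enumeration is $\textsc{Enum}(t_0,\beta)$ for each $\beta\in L'_{t_0}$ in turn, and it is complete because the restriction of any solution of $I=I[V_{t_0}]$ to $B_{t_0}$ lies in $L'_{t_0}$.

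The main obstacle is establishing the \emph{polynomial delay} bound rather than mere correctness: a naive nested-loop iteration over a Cartesian product of $k$ recursive enumerators can incur long stretches with no output (e.g. when an inner enumerator is exhausted and the next outer value must be advanced through many nodes). The standard fix, which I would spell out, is to implement each enumerator as a resettable iterator and to iterate over the product of $k$ such iterators using the classical technique for enumerating Cartesian products with polynomial delay: advance them in an odometer-like fashion, never discarding an enumerated tuple, and amortize the cost of resetting inner iterators against the outputs already produced. Since each $L'_{t_i}$ and each recursion level contributes only a polynomial factor and the tree $T$ has polynomially many nodes, the delay between consecutive outputs — and the time to the first output, and the time to detect termination — stays bounded by $\|I\|^{O(r^3)}$, which for fixed $r$ is polynomial. (As with Theorem~\ref{theo:alg-cspfhw}, one must also be slightly careful that the induced instances $I[B_t]$ and the tree decomposition are set up so that a solution of $I[V_t]$ really does decompose as claimed; but this is exactly the content of the sufficiency argument in Lemma~\ref{lem:alg-bags} and requires no new idea.)
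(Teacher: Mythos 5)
Your proposal is correct and is essentially the paper's own (sketched) argument: the paper likewise computes the $O(r^3)$-width decomposition via Theorem~\ref{th:approx}, builds the bag lists via Theorem~\ref{theo:enumerate-solution}, and then enumerates by repeatedly computing the ``lexicographically next'' solution under a preorder-based variable ordering --- which, unwound, is exactly your odometer over nested, resettable child iterators restricted to the pruned lists $L'_t$ of Lemma~\ref{lem:alg-bags}. Your only loose phrase is ``amortize'': the odometer in fact gives \emph{worst-case} polynomial delay, since one advance touches each node of $T$ only $O(1)$ times (never dead-ending, because every element of $L'_t$ is extendible), so no amortization is needed.
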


\begin{proof}[(sketch)]\sloppy
  Given an instance $I$, the algorithm first computes a fractional
  hypertree decomposition $(T, (B_t)_{t \in V(T)}, (\gamma_t)_{t \in
    V(T)})$ of $\mathcal H_I$ of width $O(r^3)$ using the algorithm of
  Theorem~\ref{th:approx}. Then it orders the nodes of $T$ by a
  preorder traversal and then orders the variables of $I$ in such a
  way that if $v\in B_t$ and $w\in B_u\setminus B_t$ and $t$ comes
  before $u$ in the preorder traversal, $v$ is smaller than $w$. For
  every node $t$ of $T$ the algorithm computes a list of all solutions
  for $I[B_t]$ and sorts it lexicographically. It is easy to modify
  the algorithm of Lemma~\ref{lem:alg-bags} to compute the
  lexicographically first solution and, for every given solution, the
  lexicographically next solution.
\end{proof}

A problem closely related to the problem of enumerating all solutions
to a given CSP-instance is the problem of computing the answer for a
conjunctive query in a relational database (see~\cite{kolvar98}). To
be precise, answering conjunctive queries is equivalent to computing
projections of solution sets of CSP-instances to a given subset $V'$ of variables. With each conjunctive
query, we can associate hypergraph in a similar way as we did for
CSP-instances. Then answering queries with hypergraphs in $\mathcal H$
is equivalent to computing projections of solution sets of
$\CSP(\mathcal H)$-instances. We define the \emph{fractional hypertree width} of
a conjunctive query to be the fractional hypertree width of its
hypergraph. 

\begin{theorem}\label{theo:cq}
    Let $r\ge 1$. Then there is a polynomial-delay algorithm that,
    given a conjunctive query $Q$ of fractional hypertree width at
    most $r$ and a database instance $D$, computes the answer $Q(D)$
    of $Q$ in $D$.
\end{theorem}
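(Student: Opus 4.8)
The plan is to reduce query answering to CSP and then enumerate the projected solution set by a pruned depth-first search, using satisfiability tests to decide which branches to pursue. First I would pass to the equivalent CSP instance: given $Q$ and a database $D$, the standard correspondence \cite{kolvar98} yields a CSP instance $I=(V,\mathrm{dom},C)$ whose hypergraph $H_I$ is exactly the hypergraph of $Q$ (so $\fhw(H_I)\le r$), whose size is polynomial in $\|Q\|+\|D\|$, and for which $Q(D)=\{\alpha|_{V'}\mid\alpha\text{ a solution of }I\}$, where $V'\subseteq V$ is the set of head variables of $Q$. Using the algorithm of Theorem~\ref{th:approx}, I would compute, once and for all, a fractional hypertree decomposition of $H_I$ of width $O(r^3)$. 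Note that simply enumerating all solutions of $I$ via Theorem~\ref{theo:enumerate-solution} and projecting would not give polynomial delay, since arbitrarily many solutions can project to the same, or to an already-reported, tuple; so a different mechanism is needed.

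The device is an extendability oracle. For a partial assignment $\alpha'$ defined on a subset $U\subseteq V'$, let $I_{\alpha'}$ be the instance obtained from $I$ by restricting, in every constraint, the domain of each $v\in U$ to the single value $\alpha'(v)$. Restricting domains does not change the hypergraph, so $H_{I_{\alpha'}}=H_I$, hence $\fhw(H_{I_{\alpha'}})\le r$; and the solutions of $I_{\alpha'}$ are precisely the solutions of $I$ agreeing with $\alpha'$ on $U$. Consequently, by Theorem~\ref{theo:alg-cspfhw} --- or, reusing the decomposition already computed, by recomputing the bag solution lists of $I_{\alpha'}[B_t]$ via Theorem~\ref{theo:enumerate-solution} and then running the algorithm of Lemma~\ref{lem:alg-bags} --- one can decide in time $\|I\|^{O(r^3)}$ whether $\alpha'$ extends to a solution of $I$.

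With this oracle, the enumeration is a standard backtracking search. Fix any order $v_1,\dots,v_m$ of $V'$. Starting from the empty assignment, from a node that has already fixed $v_1,\dots,v_i$ the algorithm tries each $d\in\mathrm{dom}$ for $v_{i+1}$, calls the oracle on the resulting partial assignment, and recurses only on those that pass; at depth $m$ it outputs the tuple $(\alpha'(v_1),\dots,\alpha'(v_m))$. Every tuple of $Q(D)$ occurs as exactly one leaf, so $Q(D)$ is enumerated without repetitions, and --- since any search node that passes the oracle test has a leaf in its subtree --- between two consecutive outputs (and before the first, and from the last to termination) the search makes only $O(m\cdot|\mathrm{dom}|)$ oracle calls, each costing $\|I\|^{O(r^3)}$ time, which is polynomial as $r$ is fixed. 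This yields the desired polynomial-delay algorithm.

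The crux of the argument is entirely in the second paragraph: recognizing that the existential projection built into $Q$ is handled by an ordinary satisfiability test on $I$ with finitely many variables pinned, together with the observation that pinning variables leaves the hypergraph, and hence the fractional hypertree width, untouched, so that Theorem~\ref{theo:alg-cspfhw} applies. After that the delay analysis is the routine search-tree counting argument. The only technical nuisance is that, because domain restriction and the projection of Definition~\ref{def:induced-instance} do not commute, one must recompute the bag lists for each modified instance $I_{\alpha'}$ rather than patch those of $I$; but the bounds of Theorem~\ref{theo:enumerate-solution} apply to $I_{\alpha'}$ verbatim, so this is harmless. I do not expect any real obstacle beyond this.
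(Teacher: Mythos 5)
Your argument is correct, but it takes a genuinely different route from the paper's. The paper obtains Theorem~\ref{theo:cq} as a ``straightforward refinement'' of the preceding polynomial-delay enumeration theorem: there, the nodes of the (approximate) decomposition are ordered by a preorder traversal, the variables are ordered compatibly with it, the bag solution lists are sorted lexicographically, and the dynamic program of Lemma~\ref{lem:alg-bags} is modified to produce the lexicographically first and next solution; the projection onto the head variables is then absorbed into this lexicographic machinery. You instead run a backtracking search over the head variables only, using the decision procedure of Theorem~\ref{theo:alg-cspfhw} as a black-box extendability oracle, the key observation being that pinning variables (deleting from each relation the tuples inconsistent with the partial assignment) leaves the hypergraph, and hence the fractional hypertree width, unchanged. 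This self-reducibility argument is more modular, and it handles the existential projection and the no-repetition requirement essentially for free --- which is precisely where the paper's one-line sketch is thinnest; the price is a larger polynomial in the delay, since each reported tuple is preceded by $O(m\cdot|\mathrm{dom}|)$ full satisfiability tests (with $m$ the number of head variables), each costing $\|I\|^{O(r^3)}$, rather than a single incremental pass over precomputed, sorted bag lists. Your remark that domain restriction does not commute with the projection of Definition~\ref{def:induced-instance}, so that the bag lists of $I_{\alpha'}[B_t]$ must be recomputed from the pinned instance rather than patched, is exactly the right technical care. Both approaches are sound; yours is arguably the cleaner complete proof of this particular statement.
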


The proof is a straightforward refinement of the proof of the
previous theorem.

\section{Conclusions}
In this paper we have considered structural properties that can make a
constraint satisfaction problem polynomial-time solvable. Previously,
bounded hypertree width was the most general such property. Answering
an open question raised in \cite{chedal05,DBLP:journals/jcss/CohenJG08,ggmss05,DBLP:journals/jacm/Grohe07},
we have identified a new class of polynomial-time solvable CSP
instances: instances having bounded fractional edge cover number. This
result suggests the definition of fractional hypertree width, which is
always at most as large as the hypertree width (and in some cases much
smaller). It turns out that CSP is polynomial-time solvable for
instances having bounded fractional hypertree width, if the hypertree
decomposition is given together with the instance. This immediately
implies that CSP is fixed-parameter tractable parameterized by the
number of variables for hypergraphs with bounded fractional
hypertree width. Furthermore, together with the algorithm of
\cite{marx-fractional-talg} finding approximate fractional hypertree
decompositions, it also follows that CSP is polynomial-time solvable
for this class. Currently, bounded fractional hypertree width is the most general known structural property 
that makes CSP polynomial-time solvable.

The most natural open question we leave open regarding fractional
hypertree width is whether for every fixed $r$ there is a
polynomial-time algorithm that decides if a hypergraph has fractional
hypertree width at most $r$ (and if so, constructs a
decomposition). As the analogous problem for generalized hypertree
width is NP-hard for $r=3$, we expect this to be the case for
fractional hypertree width as well.

Another open question is whether there are polynomial-time solvable or
fixed-parameter tractable families of CSP instances having unbounded
fractional hypertree width. Very recently, \cite{marx-stoc2010-csp}
showed that, under suitable complexity assumptions, bounded submodular
width (a condition strictly more general that bounded fractional
hypertree width) exactly characterizes the fixed-parameter
tractability of the problem. However, the exact condition
characterizing polynomial-time solvability is still an open
problem. In light of the results of \cite{marx-stoc2010-csp}, one
needs to understand the complexity of the problem for classes that
have bounded submodular width, but unbounded fractional hypertree
width.

\bibliographystyle{acmsmall} \bibliography{fractional}

\end{document}